\theoremstyle{plain}
\newcommand{\ex}[1]{\mathbb{E}\left[{#1}\right]}
\newtheorem{theorem}{Theorem}
\newtheorem{lemma}{Lemma} 
\title{Extension of the Blahut-Arimoto algorithm for maximizing directed information}
\author{Iddo Naiss and Haim Permuter
\thanks{Iddo Naiss and Haim Permuter are with the Department of Electrical and Computer Engineering,
 Ben-Gurion University of the Negev, Beer-Sheva, Israel. Emails: naiss@bgu.ac.il, haimp@bgu.ac.il.
}}
\begin{document}
\maketitle
\begin{abstract} \label{Abs}
We extend the Blahut-Arimoto algorithm for maximizing Massey's directed information. The algorithm can be used for estimating the capacity of channels with delayed feedback, where the feedback is a deterministic function of the output. In order to do so, we apply the ideas from the regular Blahut-Arimoto algorithm, i.e., the alternating maximization procedure, onto our new problem. We provide both upper and lower bound sequences that converge to the optimum value. Our main insight in this paper is that in order to find the maximum of the directed information over causal conditioning probability mass function (PMF), one can use a backward index time maximization combined with the alternating maximization procedure. We give a detailed description of the algorithm, its complexity, the memory needed, and several numerical examples.
\end{abstract}
\begin{keywords}
Alternating maximization procedure, Backwards index time maximization, Blahut-Arimoto algorithm,
Causal conditioning, Channels with feedback, Directed information, Finite state channels, Ising
Channel, Trapdoor channel.
\end{keywords}
\begin{section}{Introduction} \label{secIntro}
In his seminal work, Shannon \cite{Shannon} showed that the capacity of a memoryless channel is given as the optimization problem
\begin{align}
C=\max_{p(x)}{I(X;Y)}.\label{shancap}
\end{align}
Since the set of all $p(x)$ is not of finite cardinality, an optimization method is required to find the capacity $C$.
In order to obtain an efficient way to calculate the global maximum in (\ref{shancap}), the well-known Blahut-Arimoto algorithm (referred to as BAA) was introduced by Blahut \cite{Bla} and Arimoto \cite{Ari} in 1972. The main idea is that we can calculate the optimum value using the equality $$\max_{p(x)}I(X;Y)=\max_{p(x),p(x|y)}I(X;Y),$$ i.e., we can maximize over $p(x)$ and $p(x|y)$, instead of just $p(x)$ alone. The maximization is then achieved using the alternating maximization procedure. The convergence of the alternating maximization procedure to the global maximum was proven by Csiszar and Tusnady \cite{CsiszarTusnady}, and later by Yeung \cite{Reymund}.

In this paper, we find an efficient way to estimate the capacity of channels with feedback. It was shown by Massey \cite{Massey}, Kramer \cite{Kramer1}, Tatikonda and Mitter\cite{TatiMit}, Permuter, Weissman, and Goldsmith \cite{PermuterWeissmanGoldsmith}, and Kim \cite{Kim}, that the expression
\begin{align}
C_n=\frac{1}{n}\max_{p(x^n||y^{n-1})}I(X^n\rightarrow Y^n)\nonumber
\end{align}
has an important role in characterizing the feedback capacity, where
\begin{align}
I(X^n\rightarrow Y^n)&=\sum_{y^n,x^n}{p(y^n,x^n)\log{\frac{p(y^n||x^n)}{p(y^n)}}}\nonumber
\end{align}
is the \textit{directed information}, and $p(y^n||x^n)$ is a \textit{causally conditioned} PMF (definitions in Section \ref{secPre}) given by
\begin{align}
p(y^n||x^n)=\prod_{i=1}^n{p(y_i|y^{i-1},x^i)}.\label{causal}
\end{align}
Since in the maximization we deal with causally conditioned PMFs, trying to follow the regular BAA will result in difficulties. This is due to the fact that a causal conditioned PMF is the result of multiplications of conditioned PMFs as seen in (\ref{causal}). While in the regular BAA we maximize over $p(x^n)$, and thus the constraints are simply $\sum_{x^n}p(x^n)=1$ and $p(x^n)\geq0$, in our extended problem we have no efficient way of writing all the constraints necessary for a causally conditioned PMF. In fact, we need $n$ simple constraints, one for each product of $p(x^n||y^{n-1})$.
Another difficulty is that although the equality
\begin{align}
I(X^n\rightarrow Y^n)=\sum_{i=1}^{n}I(X_i;Y_i^n|X^{i-1},Y^{i-1})\nonumber
\end{align}
holds, we cannot translate the given problem into
\begin{align}
\sum_{i=1}^{n}\max_{p(x_i|x^{i-1},y^{i-1})}I(X_i;Y_i^n|X^{i-1},Y^{i-1})\nonumber
\end{align}
since $p(x_i|x^{i-1},y^{i-1})$ influence all terms $\{I(X_j;Y_j^n|X^{j-1},Y^{j-1})\}_{j=i}^n$. A
solution could be to maximize backwards from $i=n$ to $i=1$ over $p(x_i|x^{i-1},y^{i-1})$, and it
can be shown that in each maximization, the non-causal probability $p(x_i|x^{i-1},y^n)$ is
determined only by the previous $p(x_j|x^{j-1},y^{j-1})$ for $j\geq i$. In our solution, we
maximize the entire expression $I(X^n\rightarrow Y^n)$ as a function of
$\{p(x_1),p(x_2|x_1,y_1),...,p(x_n|x^{n-1},y^{n-1}),p(x^n|y^n)\}$. Each time we maximize over a
specific $p(x_i|x^{i-1},y^{i-1})$ starting from $i=n$ and moving backwards to $i=1$, where all
but $p(x_i|x^{i-1},y^{i-1})$ are fixed.

Before we present the extension of the BAA to the directed information, let us present some of the other extensions of this algorithm. In 2004, Matz and Duhamel\cite{Matz} proposed two Blahut-Arimoto-type algorithms that often converge significantly faster than the standard Blahut-Arimoto algorithm, which relied on following the natural gradient rather than maximizing per variable. During that year, Rezaeian and Grant \cite{RezGra} generalized the regular BAA for multiple access channels, and Dupuis, Yu, and Willems extended the BAA for channels with side information \cite{DupYuWil}. They used the fact that the input is a deterministic function of the auxiliary variable and the side information, and then extended the input alphabet. Another solution to the side information problem was given by El Gamal and Heegard\cite{GamHee}, where they did not expand the alphabet, but included an additional step to optimize over $p(x|u,s)$. Also, the BAA was used by Egorov, Markavian, and Pickavance \cite{EgoMarkPick} to decode Reed Solomon codes. In 2005 Dauwels \cite{Dauw} showed how the BAA can be used to calculate the capacity of continuous channels. Dauwels's main idea is the use of sequential Monte-Carlo integration methods known as the "particle filters". In 2008 Vontobel, Kav\u{c}i\'{c}, Arnold, and Loeliger\cite{pascal} extended the regular BAA to estimate the capacity of finite state channels where the input is Markovian.
Sumszyk and Steinberg \cite{SumStein} gave a single letter characterization of the capacity of an information embedding channel and provided a BA-type algorithm for the case where the channel is independent of the host given the input.

Recently, few papers about the maximization of the directed information using control theory and
dynamic programming were published. In \cite{YangKavTati}, Yang, Kavcic and Tatikonda maximized
the directed information to estimate the feedback capacity of finite-state machine channels where
the state is a deterministic function of the previous state and input. Chen and Berger
\cite{Chen05} maximized the directed information for the case where the state of the channel is
known to the encoder and decoder in addition to the feedback link. Later, Permuter, Cuff, Van Roy
and Weissman \cite{PermuterPaulBenTsachy} maximized the directed information and found the
capacity of the trapdoor channel with feedback. In \cite{GorCol}, Gorantla and Coleman estimated
the maximum of directed information where they considered a dynamical system, whose state is an
input to a memoryless channel. The state of the dynamical system is affected by its past, an
exogenous input, and causal feedback from the channel's output.

The remainder of the paper is organized as follows. In Section \ref{secPre} we present the
notations we use throughout the paper, and give the outline for the alternating maximization
procedure as given by Yeung\cite{Reymund}. In Section \ref{secDescription} we give a description
of the algorithm for solving the optimization problem- $\max_{p(x^n||y^{n-1})}I(X^n\rightarrow
Y^n)$, calculate the complexity of the algorithm and memory needed, and compare it with those of
the regular BAA. In Section \ref{secDerivate} we derive the algorithm using the alternating
maximization procedure, and show the convergence of our algorithm to the optimum value. Numerical
examples for channel capacity with feedback are presented in Section \ref{secEx}. In Appendix
\ref{appfd} we give a wider angle on the feedback channel problem, where the feedback of the
channel is a deterministic function $f$ of the output with some delay $d$; namely, we derive the
algorithm for the optimization problem $\max_{p(x^n||z^{n-d})}I(X^n\rightarrow Y^n),$ where
$z_i=f(y_i)$ and $d\geq 1$. In Appendix \ref{proofThUp} we prove an upper bound for
$\max_{p(x^n||y^{n-d})}I(X^n\rightarrow Y^n),$ which converges to the directed information from
above and helps determining the stoping iteration of the algorithm.
\end{section}

\begin{section}{Preliminaries} \label{secPre}
\begin{subsection}{Directed information and causal conditioning}
In this section we present the definitions of directed information and causally conditioned PMF,
originally introduced by Massey\cite{Massey} (who was inspired by Marko's work \cite{Marko73} on
Bidirectional Communication) and by Kramer \cite{Kramer1}. These definitions are necessary in
order to address channels with memory. We denote by $X_1^n$ the vector $(X_1,X_2,...X_n)$.
Usually we use the notation $X^n=X_1^n$ for short. Further, when writing a PMF we simply write
$P_X(X=x)=p(x)$. Let us denote as $p(x^n||y^{n-d})$ the probability mass function (PMF) of $X^n$
\textit{causally conditioned} on $Y^{n-d}$, given by
\begin{align}
p(x^n||y^{n-d})\triangleq\prod^{n}_{i=1}{p(x_i|x^{i-1}y^{i-d})}.\label{causalpmf}
\end{align}
Here we have to establish that when $d>n$, the vector $X^{n-d}=\emptyset$.
Two straight forward properties of the causal conditioning PMF that we use throughout the paper are
\begin{align}
\sum_{x_n}p(x^n||y^{n-d})=p(x^{n-1}||y^{n-d-1}),\label{casprop1}
\end{align}
and
\begin{align}
p(x_i|x^{i-1}y^{i-d})=\frac{p(x^i||y^{i-d})}{p(x^{i-1}||y^{i-d-1})}\label{casprop2}.
\end{align}
Another elementary property is the chain rule for directed information
\begin{align}
p(x^n||y^{n-1})p(y^n||x^n)=p(x^n,y^n).
\end{align}
The definitions above lead to the causally conditioned entropy $H(X^n||Y^n)$, which is given by $$H(X^n||Y^n)\triangleq-\ex{\log p(X^n||Y^n)}.$$
Moreover, the directed information from $X^n$ to $Y^n$ is given by
\begin{align}
I(X^n\rightarrow Y^n)\triangleq H(Y^n)-H(Y^n||X^n).\label{directed1}
\end{align}
It is possible to show, that we can write the directed information as such:
\begin{align}
I(X^n\rightarrow Y^n)=\sum_{y^n,x^n}{p(y^n||x^n)r(x^n||y^{n-1})\log{\frac{q(x^n|y^n)}{r(x^n||y^{n-1})}}}.\nonumber
\end{align}
We refer to this form when using the alternating maximization procedure since $\{\textbf{r}=r(x^n||y^{n-1}),\ \textbf{q}=q(x^n|y^n)\}$ are the variables we optimize over where $p(y^n||x^n)$ is fixed. For convenience, we use from now on the notation of
\begin{align}
I(X^n\rightarrow Y^n)=\cal{I}(\textbf{r},\textbf{q})\label{calI}
\end{align}
when required.
With these definitions, we follow the alternating maximization procedure given by Yeung\cite{Reymund} in order to maximize the directed information.
\end{subsection}

\begin{subsection}{Alternating maximization procedure}
Here, we present the alternating maximization procedure on which our algorithm is based.
Let $f(u_1,u_2)$ be a real function, and let us consider the optimization problem given by
$$\sup_{u_1\in A_1, u_2\in A_2}f(u_1,u_2)=f^*.$$
We denote by $c_2(u_1)\in A_2$ the point that achieves $\sup_{u_2\in A_2}f(u_1,u_2)$, and by $c_1(u_2)\in A_1$ the one that achieves $\sup_{u_1\in A_1}f(u_1,u_2)$.
The algorithm is defined by iterations, where in each iteration we maximize over one of the variables.
Let $(u_1^0,u_2^0)$ be an arbitrary point in $A_1\times A_2$. For $k\geq0$ let
$$(u_1^k,u_2^k)=(c_1(u_2^{k-1}),c_2(c_1(u_2^{k-1}))),$$ and let $f^k=f(u_1^k,u_2^k)$ be the value if the present iteration.
The following lemma describes the conditions the problem needs to meet in order for $f^k$ to converge to $f^*$ as $k$ goes to infinity.
\begin{lemma}[Lemmas 9.4, 9.5 in \cite{Reymund}, Convergence of the alternating maximization procedure]\label{lemconv}.
Let $f(u_1,u_2)$ be a real, concave, bounded from above function that is continuous and has continuous partial derivatives, and let the sets $A_1,A_2,$ which we maximize over, be convex.
Further, assume that $c_2(u_1)\in A_2$ and $c_1(u_2)\in A_1$ for all $u_1\in A_1,\ u_2\in A_2$.
Under these conditions, $\lim_{k\rightarrow\infty}f^k=f^{*}$.
\end{lemma}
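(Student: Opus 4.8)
The plan is to prove the two inequalities $\lim_k f^k \le f^*$ and $\lim_k f^k \ge f^*$ separately. First I would establish that the sequence $f^k$ is monotone. Since $u_1^k = c_1(u_2^{k-1})$ maximizes $f(\cdot, u_2^{k-1})$ over $A_1$ and $u_2^k = c_2(u_1^k)$ maximizes $f(u_1^k, \cdot)$ over $A_2$, we get
$$f^{k-1} = f(u_1^{k-1}, u_2^{k-1}) \le f(u_1^k, u_2^{k-1}) \le f(u_1^k, u_2^k) = f^k.$$
Thus $\{f^k\}$ is non-decreasing, and being bounded above by $f^*$ (and by the assumption that $f$ is bounded from above) it converges to some limit $f^\infty \le f^*$. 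The same display shows that the intermediate values $f(u_1^k, u_2^{k-1})$ are squeezed between $f^{k-1}$ and $f^k$, so they too converge to $f^\infty$; this observation will be used below.

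The heart of the proof is the reverse inequality $f^\infty \ge f^*$, which I would obtain by showing $f(u_1, u_2) \le f^\infty$ for every fixed $(u_1, u_2) \in A_1 \times A_2$ and then taking the supremum. Since $A_1 \times A_2$ is convex and $f$ is concave, the segment from $(u_1^k, u_2^k)$ to $(u_1, u_2)$ lies in the domain, and the first-order (gradient) inequality for concave functions gives
$$f(u_1, u_2) - f^k \le \langle \nabla_{u_1} f(u_1^k, u_2^k),\, u_1 - u_1^k \rangle + \langle \nabla_{u_2} f(u_1^k, u_2^k),\, u_2 - u_2^k \rangle,$$
where the partial gradients exist and are continuous by hypothesis. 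The second inner product is non-positive: because $u_2^k = c_2(u_1^k)$ maximizes the concave function $f(u_1^k, \cdot)$ over the convex set $A_2$, its first-order optimality condition reads $\langle \nabla_{u_2} f(u_1^k, u_2^k),\, u_2 - u_2^k \rangle \le 0$ for every $u_2 \in A_2$.

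It remains to control the first inner product, and this is the step I expect to be the main obstacle. The optimality of $u_1^k = c_1(u_2^{k-1})$ only yields $\langle \nabla_{u_1} f(u_1^k, u_2^{k-1}),\, u_1 - u_1^k \rangle \le 0$, so I must bound the discrepancy
$$\langle \nabla_{u_1} f(u_1^k, u_2^k) - \nabla_{u_1} f(u_1^k, u_2^{k-1}),\, u_1 - u_1^k \rangle.$$
The natural way to make this vanish is to argue that $u_2^k - u_2^{k-1} \to 0$ along a subsequence and invoke the continuity of $\nabla_{u_1} f$, with $\|u_1 - u_1^k\|$ kept bounded (which holds since the optimization domain in the Blahut--Arimoto application is a product of probability simplices, hence compact). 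Concretely, I would pass to a subsequence along which $(u_1^k, u_2^k)$ and $(u_1^k, u_2^{k-1})$ converge, say to $(\bar u_1, \bar u_2)$ and $(\bar u_1, \bar u_2')$; since $f(u_1^k, u_2^{k-1})$ and $f^k$ share the limit $f^\infty$, both $\bar u_2$ and $\bar u_2'$ maximize $f(\bar u_1, \cdot)$, and reconciling them (via strict concavity in the relevant directions, or directly via the uniqueness of the block updates) forces the discrepancy term to $0$. Along this subsequence the right-hand side of the gradient inequality becomes non-positive in the limit, so $f(u_1, u_2) \le f^\infty$. Taking the supremum over $(u_1, u_2)$ gives $f^* \le f^\infty$, which together with $f^\infty \le f^*$ yields $\lim_{k\to\infty} f^k = f^\infty = f^*$. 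The delicate point throughout is that plain concavity need not make the block maximizers $c_1, c_2$ unique, so the convergence of the iterates must be handled through compactness and the vanishing of successive increments rather than by a contraction argument.
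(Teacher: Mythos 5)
The paper itself gives no proof of this lemma---it is quoted verbatim from Yeung---so the relevant comparison is with the standard argument there, whose outline you follow: monotonicity of $f^k$, then a first-order concavity inequality whose two partial-gradient terms are killed by the block optimality conditions. The monotonicity half of your argument is correct. The gap is exactly where you flagged it: to make the cross term $\langle \nabla_{u_1} f(u_1^k,u_2^k)-\nabla_{u_1} f(u_1^k,u_2^{k-1}),\,u_1-u_1^k\rangle$ vanish you need the two subsequential limits $\bar u_2$ and $\bar u_2'$ to coincide, and you propose to get this from ``strict concavity in the relevant directions'' or ``uniqueness of the block updates.'' Neither is among the hypotheses: only plain concavity is assumed, and the block maximizer of a merely concave function need not be unique (take $f$ constant in $u_2$). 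Knowing that $\bar u_2$ and $\bar u_2'$ both maximize $f(\bar u_1,\cdot)$ does not make them equal, and if they differ the gradients $\nabla_{u_1}f(\bar u_1,\bar u_2)$ and $\nabla_{u_1}f(\bar u_1,\bar u_2')$ can differ, so the discrepancy term has no reason to tend to zero. As written, the proof does not close.

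The repair is to stop evaluating the gradient inequality at $(u_1^k,u_2^k)$ and instead run the entire first-order argument at a single accumulation point of the pairs $(u_1^k,u_2^{k-1})$. Let $(\bar u_1,\bar u_2')$ be such a limit. Passing to the limit in $f(u_1^k,u_2^{k-1})\ge f(v_1,u_2^{k-1})$ shows that $\bar u_1$ maximizes $f(\cdot,\bar u_2')$; and since $f(u_1^k,u_2^{k-1})$ and $f(u_1^k,u_2^k)=\max_{v_2}f(u_1^k,v_2)$ share the limit $f^\infty$ (your ``squeezing'' observation), one gets $f(\bar u_1,\bar u_2')=f^\infty\ge f(\bar u_1,v_2)$ for every $v_2$, so $\bar u_2'$ also maximizes $f(\bar u_1,\cdot)$. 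The first-order optimality conditions at this one point make both inner products in $f(u_1,u_2)-f(\bar u_1,\bar u_2')\le\langle\nabla_{u_1}f(\bar u_1,\bar u_2'),\,u_1-\bar u_1\rangle+\langle\nabla_{u_2}f(\bar u_1,\bar u_2'),\,u_2-\bar u_2'\rangle$ nonpositive, i.e., a coordinatewise maximizer of a differentiable concave function on a product of convex sets is a global maximizer; this fixed-point-of-the-limit argument is essentially Yeung's Lemma 9.5. Note also that the subsequence extraction requires a compactness hypothesis the lemma does not state (and in this paper's application $A_1=\{r>0\}$ is not closed, so limit points can sit on the boundary where the logarithmic gradients blow up); Yeung's precise statement carries extra conditions for exactly this reason.
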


In Section \ref{secDescription} we give a detailed description of the algorithm that computes $\max_{p(x^n||y^{n-1})}I(X^n\rightarrow Y^n)$ based on the alternating maximization procedure. In Section \ref{secDerivate} we show that the conditions in Lemma \ref{lemconv} hold, and therefore the algorithm we suggest, which is based on the alternating maximization procedure, converges to the global optimum.
\end{subsection}
\end{section}

\begin{section}{Description of the algorithm} \label{secDescription}
In this section, we describe an algorithm for maximizing the directed information. In addition, we compute the complexity of the algorithm per iteration, and compare it to the complexity of the regular BAA. The memory calculation is also given.
\begin{subsection}{The algorithm for channel with feedback}
In Algorithm \ref{algc}, we present the steps required to maximize the directed information where the channel $p(y^n||x^n)$ is fixed and the delay is $d=1$.
\begin{algorithm}
\caption{Iterative algorithm for calculating $\max_{p(x^n||y^{n-1})}I(X^n\rightarrow Y^n)$, where $p(y^n||x^n)$ is fixed.}\label{algc}
\begin{itemize}
\item[(a)] Start from a random point $q(x^n|y^n)$. Usually we start from a uniform distribution, i.e., $q(x^n|y^n)=2^{-n}$ for every $(x^n,y^n)$\\
\item[(b)] Starting from $i=n$, calculate $r(x_i|x^{i-1},y^{i-1})$ using the formula\\
\begin{equation}
r(x_i|x^{i-1},y^{i-1})=\frac{r'(x^i,y^{i-1})}{\sum_{x_i}r'(x^i,y^{i-1})},\label{RIgen}
\end{equation}
where
\begin{equation}
r'(x^i,y^{i-1})=\prod_{x_{i+1}^n,y_{i}^n}{\left[\frac{q(x^n|y^n)}{\prod_{j=i+1}^{n}{r(x_j|x^{j-1},y^{j-1})}}\right]^{p(y_i|x^i,y^{i-1})\prod_{j=i+1}^{n}{r(x_j|x^{j-1},y^{j-1})p(y_j|x^j,y^{j-1})}}},\label{RIform}
\end{equation}
and do so backwards until $i=1$.\\
\item[(c)] Once you have $r(x_i|x^{i-1},y^{i-1})$ for all $i\in\{1,...,n\}$, compute $r(x^n||y^{n-1})=\prod_{i=1}^n{r(x_i|x^{i-1},y^{i-1})}$.\\
\item[(d)] Compute $q(x^n|y^n)$ using the formula
\begin{equation}
q(x^n|y^n)=\frac{r(x^n||y^{n-1})p(y^n||x^n)}{\sum_{x^n}{r(x^n||y^{n-1})p(y^n||x^n)}}.\label{Qform}
\end{equation}
\item[(e)] Calculate $I_U-I_L$, where
\begin{align}
I_L&=\frac{1}{n}\sum_{y^n,x^n}{p(y^n||x^n)r(x^n||y^{n-1})\log{\frac{q(x^n|y^n)}{r(x^n||y^{n-1})}}},\nonumber\\
I_U&=\frac{1}{n}\max_{x_1}\sum_{y_1}\max_{x_{2}}\cdots\sum_{y_{n-1}}\max_{x_n}\sum_{y_{n}}p(y^n||x^n)\log\frac{p(y^n||x^n)}{\sum_{x'^n}p(y^n||x'^n)\cdot r(x'^n||y^{n-1})}.\nonumber
\end{align}
\item[(f)] Return to (b) if $(I_U-I_L)\geq\epsilon$.
\item[(g)] $C_n=I_L$.
\end{itemize}
\end{algorithm}
Note that the regular BAA has a structure similar to that of Algorithm \ref{algc}, where step (b) is an additional backward loop. Its purpose is to maximize over the input causal probability, which is not necessary in the regular BAA.

Now, let us present a special case and a few extensions for Alg. \ref{algc}.
\begin{itemize}
\item[(1)] \textit{Regular BAA, i.e., $n=1$}. For $n=1$, the algorithm suggested here agrees with the original BAA, where instead of steps (b), (c) we have
\begin{align}
r(x)=\frac{\prod_y{q(x|y)^{p(y|x)}}}{\sum_{x}\prod_{y}q(x|y)^{p(y|x)}},\label{BA1r}
\end{align}
and step (d) is replaced by
\begin{align}
q(x|y)=\frac{r(x)p(y|x)}{\sum_{x}{r(x)p(y|x)}}.\label{BA1q}
\end{align}
The bounds $I_L,\ I_U$ agree with the regular BAA as well, and are of the form
\begin{align}
I_L&=\sum_{y,x}{p(y|x)r(x)\log{\frac{q(x|y)}{r(x)}}},\nonumber\\
I_U&=\max_{x}\sum_{y}p(y|x)\log\frac{p(y|x)}{\sum_{x'}p(y|x')\cdot r(x')}.\nonumber
\end{align}

\item[(2)] \textit{Feedback with general delay $d$}. We can generalize the algorithm in order to compute $\max_{r(x^n||y^{n-d})}I(X^n\rightarrow Y^n)$, where the feedback is the output with delay $d$. In that case, in step (b) we have
\begin{align}
r'(x^i,y^{i-d})=\prod_{x_{i+1}^n,y_{i-d+1}^n}{\left[\frac{q(x^n|y^n)}
{\prod_{j=i+1}^{n}{r(x_j|x^{j-1},y^{j-d})}}\right]^{\prod_{j=i-d+1}^{n}{p(y_j|x^{j},y^{j-1})}\prod_{j=i+1}^{n}{r(x_j|x^{j-1},y^{j-d})}}},\label{Rform1}
\end{align}
and step (d) will be replaced by
\begin{align}
q(x^n|y^n)=\frac{r(x^n||y^{n-d})p(y^n||x^n)}{\sum_{x^n}{r(x^n||y^{n-d})p(y^n||x^n)}}.\label{Qform1}
\end{align}
The bounds $I_L,\ I_U$ are of the form
\begin{align}
I_L&=\frac{1}{n}\sum_{y^n,x^n}{p(y^n||x^n)r(x^n||y^{n-d})\log{\frac{q(x^n|y^n)}{r(x^n||y^{n-d})}}},\nonumber\\
I_U&=\frac{1}{n}\max_{x^d}\sum_{y_1}\max_{x_{d+1}}\cdots\sum_{y_{n-d}}\max_{x_n}\sum_{y_{n-d+1}^n}p(y^n||x^n)\log\frac{p(y^n||x^n)}{\sum_{x'^n}p(y^n||x'^n)\cdot r(x'^n||y^{n-d})}.\nonumber
\end{align}
\item[(3)] \textit{Feedback as a function of the output with general delay}. In Appendix \ref{appfd}, we generalize the algorithm in order to compute $\max_{r(x^n||z^{n-d})}I(X^n\rightarrow Y^n)$, where the feedback $z^{n-d}$ is a deterministic function of the delayed output. The expression characterizes the capacity of channels with time-invariant feedback\cite{PermuterWeissmanGoldsmith}.
     In that case, in step (b) we have
\begin{align}
r'(x^i,z^{i-d})=\prod_{x_{i+1}^n,y_{i-d+1}^n}\prod_{A_{i,d,z}}{\left[\frac{q(x^n|y^n)}
{\prod_{j=i+1}^{n}{r(x_j|x^{j-1},z^{j-d})}}\right]^{\frac{p(y^n||x^n)\prod_{j=i+1}^{n}{r(x_j|x^{j-1},z^{j-d})}}{\sum_{A_{i,d,z}}\prod_{j=1}^{i-d}{p(y_j|x^{j},y^{j-1})}}}},\label{Rformfd}
\end{align}
where we define the set $A_{i,d,z}\triangleq\{y^{i-d}:z^{i-d}=f(y^{i-d})\}$ as the set of output sequences that $f$ transforms to $z^{i-d}$,
and step (d) will be replaced by
\begin{align}
q(x^n|y^n)=\frac{r(x^n||z^{n-d})p(y^n||x^n)}{\sum_{x^n}{r(x^n||z^{n-d})p(y^n||x^n)}}.\label{Qformfd}
\end{align}
The bounds $I_L,\ I_U$ are of the form
\begin{align}
I_L&=\frac{1}{n}\sum_{y^n,x^n}{p(y^n||x^n)r(x^n||z^{n-d})\log{\frac{q(x^n|y^n)}{r(x^n||z^{n-d})}}},\nonumber\\
I_U&=\frac{1}{n}\max_{x^d}\sum_{z_1}\max_{x_{d+1}}\cdots\sum_{z_{n-d}}\max_{x_n}\sum_{A_{n,d,z}}\sum_{y_{n-d+1}^n}p(y^n||x^n)\log\frac{p(y^n||x^n)}{\sum_{x'^n}p(y^n||x'^n)\cdot r(x'^n||z^{n-d})}.\nonumber
\end{align}
\end{itemize}

Note, that for $d=n$, the vector $z^{n-d}=\emptyset$, hence $r(x_i|x^{i-1},z^{i-d})=r(x_i|x^{i-1})$, and
$$r(x^n||z^{n-d})=\prod_{i=1}^n r(x_i|x^{i-1})=r(x^n).$$
Also note that when $f(y)=const$, $r(x^n||z^{n-d})=r(x^n)$, $A_{i,d,z}=y^{i-d}$, and
$\sum_{y^{i-d}}\prod_{j=1}^{i-d}{p(y_j|x^{j},y^{j-1})}=1$.
In each of the cases above ($d=n$ or $f(y)=const.$), in step (d) we have
$$q(x^n|y^n)=\frac{r(x^n)p(y^n||x^n)}{\sum_{x^n}{r(x^n)p(y^n||x^n)}},$$
and we obtain a different version of the regular BAA for channel capacity, where the maximization is done over all $r(x_i|x^{i-1})$ instead of over $r(x^n)$ at once. Furthermore, if $f(y)=y$ then case (3) agrees with all the equations of case (2).
\end{subsection}

\begin{subsection}{Complexity and Memory needed}
Here, we give an expression for the computation complexity of one iteration in the algorithm, and then compare it to regular BAA.
This will be done in two parts, one for each step in the iteration.
\begin{itemize}
\item[(1)] Complexity of computing $q(x^n|y^n)$ as given in (\ref{Qform}).
For each $y^n$, we need ${|\cal{X}|}^n$ multiplications for a specific $x^n$ and use the denominator computed for every other $x^n$, thus obtaining $O({|\cal{X}|}^n)$ operations. Doing so for all $y^n$ achieves $O({|\cal{X}|}^n{|\cal{Y}|}^n)=O({(|\cal{X}||\cal{Y}|)}^n)$.
\item[(2)] Complexity of computing $r(x^n||y^{n-1})$.
First, we compute the complexity of each $r(x_i|x^{i-1},y^{i-1})$ as given in (\ref{RIform}), assuming that an exponent is a constant number of computations, i.e., $O(1)$. Simple computations will conclude that the entire numerator takes about $O((n-i){(|\cal{X}||\cal{Y}|)}^{n-i})$ computations. The denominator is a summation over ${|\cal{X}|}^i$ variables, and as with $q(x^n|y^n)$, we can use the denominator for every other $x^i$. Hence, we obtain $O((n-i){(|\cal{X}||\cal{Y}|)}^n)$ computations for every $i\in\{1..n\}$. Summing over $i$ will achieve $O((n+n^2){(|\cal{X}||\cal{Y}|)}^n)=O(n^2{(|\cal{X}||\cal{Y}|)}^n)$ computations. Multiplying all $r(x_i|x^{i-1},y^{i-1})$s is a constant number of computations for every $(x_i,y_i)$. Finally, in order to compute $r(x^n||y^{n-1})$ we need $O((n^2+n){(|\cal{X}||\cal{Y}|)}^{n})$ computations.
\end{itemize}
To conclude, each iteration requires about $O(n^2{(|\cal{X}||\cal{Y}|)}^{n})$ computations.

Comparing to regular BAA: Since BAA computes the capacity of memoryless channels, we only need to compute $r(x)$ and $q(x|y)$. In much the same way, we can have its complexity and achieve  $O({(|\cal{X}||\cal{Y}|)})$ computations.
However, if we want to compare it to BAA for channels with memory, we replace $X\Leftrightarrow X^n$, $Y\Leftrightarrow Y^n$ But, $|{\cal{X}}^{n}|={|{\cal{X}}|}^n$ and so we obtain $O({(|\cal{X}||\cal{Y}|)}^{n})$ computations.
The memory needed for the algorithm is very much dependent on the manner in which one implements the algorithm. However, the obligatory memory needed is for $q,\ p$, and $r$ and its products; thus we need at least $n{(|\cal{X}||\cal{Y}|)}^n$ cells of type double.
Computation complexity and memory needed are presented in Table \ref{ComplexMemory}.
\begin{table}[h!]
\caption{Memory and operations needed for regular and extended BAA for channel coding with feedback.} 
\centering
\begin{tabular}{|c ||c | c|}
\hline
& Operation & Memory\\
\hline
&&\\
$\max_{p(x)}\left(\frac{1}{n}I(X^n;Y^n)\right)$, regular BAA for channel capacity & $O({(|\cal{X}||\cal{Y}|)}^{n})$ & ${(|\cal{X}||\cal{Y}|)}^{n}$\\
&&\\
\hline
&&\\
$\max_{p(x^n||y^{n-1})}\left(\frac{1}{n}I(X^n\rightarrow Y^n)\right)$, Alg. \ref{algc} & $O(n^2{(|\cal{X}||\cal{Y}|)}^{n})$ & $n{(|\cal{X}||\cal{Y}|)}^n$\\
&&\\
\hline
\end{tabular}
\label{ComplexMemory}
\end{table}
\end{subsection}
\end{section}

\begin{section}{Derivation of Algorithm \ref{algc}} \label{secDerivate}
In this section, we derive Algorithm \ref{algc} using the alternating maximization procedure, and conclude its convergence to the global optimum using Lemma \ref{lemconv}. Throughout the paper, note that the channel $p(y^n||x^n)$ is fixed in all maximization calculations. For this purpose we present several lemmas that will assist in proving our main goal: an algorithm for calculating $\max I(X^n\rightarrow Y^n)$. In Lemma \ref{isconcave} we show that the directed information function has the properties required for lemma \ref{lemconv}. In Lemma \ref{whyalt} we show that we are allowed to maximize the directed information over $r(x^n||y^{n-1})$ and $q(x^n|y^n)$ combined, rather than just over $r(x^n||y^{n-1})$, thus creating an opportunity to use the alternating maximization procedure for achieving the optimum value. Lemma \ref{rfix} is a supplementary claim that helps us prove Lemma \ref{whyalt}, in which we find an expression for $q(x^n|y^n)$ that maximizes the directed information where $r(x^n||y^{n-1})$ is fixed. In Lemma \ref{qfix} we find an explicit expression for $r(x^n||y^{n-1})$ that maximizes the directed information where $q(x^n|y^n)$ is fixed. Theorem \ref{Thc} combines all lemmas to show that the alternating maximization procedure as described by $I_L$ in Alg. \ref{algc} exists and converges. We end with Theorem \ref{Thupbound} that proves the existence of the upper bound, $I_U$.
\begin{lemma}\label{isconcave}.
For a fixed channel $p(y^n||x^n)$, the directed information given by
\begin{align}
I(X^n\rightarrow Y^n)=\sum_{y^n,x^n}{p(y^n||x^n)r(x^n||y^{n-1})\log{\frac{q(x^n|y^n)}{r(x^n||y^{n-1})}}}\label{rqfbform}
\end{align}
as a function of $\{\textbf{r}=r(x^n||y^{n-1}),\ \textbf{q}=q(x^n|y^n)\}$ is concave, continuous and has continuous partial derivatives.
\begin{proof}
First we need to show that the directed information can be written as above by using the causal conditioning chain rule.
\begin{align}
I(X^n\rightarrow Y^n)&=\sum_{y^n,x^n}{p(y^n,x^n)\log{\frac{p(y^n||x^n)}{p(y^n)}}}\nonumber\\
&=\sum_{y^n,x^n}p(y^n||x^n)r(x^n||y^{n-1})\log{\frac{p(y^n||x^n)r(x^n||y^{n-1})}{p(y^n)r(x^n||y^{n-1})}}\nonumber\\
&=\sum_{y^n,x^n}{p(y^n||x^n)r(x^n||y^{n-1})\log{\frac{q(x^n|y^n)}{r(x^n||y^{n-1})}}}.\nonumber
\end{align}
Then we recall the log-sum inequality \cite[Theorem 2.7.1]{CoverTomas} given by
\begin{align}
\sum_{i=1}^n{a_i\log\frac{a_i}{b_i}}\geq\left(\sum_{i=1}^n{a_i}\right)\log\frac{\sum_{i=1}^n{a_i}}{\sum_{i=1}^n{b_i}}.\label{logsum}
\end{align}
We define the sets
\begin{align}
A_1&=\{r(x^n||y^{n-1}): r(x^n||y^{n-1})>0 \text{\ is a causally conditioned PMF}\},\nonumber\\
A_2&=\{q(x^n|y^n): q(x^n|y^n) \text{\ is a conditioned PMF}\},\label{sets}
\end{align}
as the sets over which we maximize.
Now, for $(\textbf{r}_1,\textbf{q}_1),\ (\textbf{r}_2,\textbf{q}_2)$ in $A=A_1\times A_2$ and $\lambda\in[0,1]$, by using the log-sum inequality given above we derive that
\begin{eqnarray}
(\lambda r_1+(1-\lambda)r_2)\log\frac{\lambda r_1+(1-\lambda)r_2}{\lambda q_1+(1-\lambda)q_2}\nonumber
\leq\lambda r_1\log\frac{r_1}{q_1}+(1-\lambda)r_2\log\frac{r_2}{q_2}.\nonumber
\end{eqnarray}
Taking the reciprocal of the logarithms yields
\begin{eqnarray}
(\lambda r_1+(1-\lambda)r_2)\log\frac{\lambda q_1+(1-\lambda)q_2}{\lambda r_1+(1-\lambda)r_2}\nonumber
\geq\lambda r_1\log\frac{q_1}{r_1}+(1-\lambda)r_2\log\frac{q_2}{r_2}.\nonumber
\end{eqnarray}
Multiplying by $p(y^n||x^n)$ and summing over all $x^n,\ y^n$, and letting $\cal{I}(\textbf{r},\textbf{q})$ be the directed information as in (\ref{calI}), we obtain
\begin{align}
{\cal{I}}(\lambda\textbf{r}_1+(1-\lambda)\textbf{r}_2,\lambda\textbf{q}_1+(1-\lambda)\textbf{q}_2\geq\lambda {\cal{I}}(\textbf{r}_1,\textbf{q}_1)+(1-\lambda){\cal{I}}(\textbf{r}_2,\textbf{q}_2).\nonumber
\end{align}
Further, since the function $\log(x)$ is continuous with continuous partial derivatives, and the directed information is a summation of functions of type $\log(x)$, $\cal{I}(\textbf{r},\textbf{q})$ has the same properties as well. Moreover, it is simple to verify that the sets $A_1,\ A_2$ are both convex, and we can conclude that all conditions in Lemma \ref{lemconv} hold for the directed information.
\end{proof}
\end{lemma}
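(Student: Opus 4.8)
The plan is to verify in turn the three claimed properties: that $\mathcal{I}(\mathbf{r},\mathbf{q})$ can be written in the product form of (\ref{rqfbform}), that it is jointly concave in $(\mathbf{r},\mathbf{q})$ over a convex domain, and that it is continuous with continuous partial derivatives. First I would reexpress the directed information in terms of $\mathbf{r}=r(x^n||y^{n-1})$ and $\mathbf{q}=q(x^n|y^n)$. Starting from $I(X^n\rightarrow Y^n)=\sum_{y^n,x^n}p(y^n,x^n)\log\frac{p(y^n||x^n)}{p(y^n)}$, I would apply the causal-conditioning chain rule $p(y^n,x^n)=p(y^n||x^n)r(x^n||y^{n-1})$ to rewrite the joint law, and identify $p(y^n)=p(y^n||x^n)r(x^n||y^{n-1})/q(x^n|y^n)$. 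Substituting this into the logarithm cancels the fixed factor $p(y^n||x^n)$ inside the $\log$ and leaves exactly (\ref{rqfbform}), in which $p(y^n||x^n)$ now enters only as a fixed nonnegative weight and $\mathbf{r},\mathbf{q}$ are the free variables.

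For concavity, I would first record that the feasible sets are cut out by normalization equalities together with positivity/nonnegativity constraints, all of which are linear; hence the maximization sets are convex and so is their product. The substantive point is the joint concavity on the positive quadrant of the scalar map $(r,q)\mapsto r\log(q/r)$. I would obtain this from the log-sum inequality applied to the two pairs $(\lambda r_1,\lambda q_1)$ and $((1-\lambda)r_2,(1-\lambda)q_2)$: the factors $\lambda$ and $1-\lambda$ cancel inside the logarithms, yielding $\lambda r_1\log\frac{r_1}{q_1}+(1-\lambda)r_2\log\frac{r_2}{q_2}\ge(\lambda r_1+(1-\lambda)r_2)\log\frac{\lambda r_1+(1-\lambda)r_2}{\lambda q_1+(1-\lambda)q_2}$, and flipping the logarithm converts this into the desired per-term inequality for the convex combination. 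Multiplying each such inequality by the fixed weight $p(y^n||x^n)\ge 0$ and summing over all $(x^n,y^n)$ preserves the direction of the inequality, giving $\mathcal{I}(\lambda\mathbf{r}_1+(1-\lambda)\mathbf{r}_2,\lambda\mathbf{q}_1+(1-\lambda)\mathbf{q}_2)\ge\lambda\mathcal{I}(\mathbf{r}_1,\mathbf{q}_1)+(1-\lambda)\mathcal{I}(\mathbf{r}_2,\mathbf{q}_2)$, which is concavity.

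For smoothness and the conclusion, I would note that each summand is $p(y^n||x^n)\bigl(r\log q-r\log r\bigr)$, a $C^\infty$ function on $\{r>0,q>0\}$; being a finite sum of such terms, $\mathcal{I}(\mathbf{r},\mathbf{q})$ is continuous and has continuous partial derivatives there. Boundedness from above is immediate since the quantity is of mutual-information type and cannot exceed $\log|\mathcal{X}|^{n}$. Combined with the convexity of the two feasible sets, this checks every hypothesis of Lemma \ref{lemconv}.

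The main obstacle is the per-term joint concavity: because $r$ and $q$ appear coupled inside the single term $r\log(q/r)$, the function is not separable and concavity cannot be read off coordinatewise, so one genuinely needs the log-sum inequality (equivalently, the fact that $r\log(q/r)$ is the concave perspective of $\log$). Two points require care: ensuring the reformulation in the first step keeps $p(y^n||x^n)$ fixed and nonnegative, so that the weighted summation preserves concavity, and handling the boundary behavior as $r\to 0$, which is why the set $A_1$ is taken with strict positivity.
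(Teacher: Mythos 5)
Your proposal is correct and follows essentially the same route as the paper's proof: the chain-rule rewriting into the $(\mathbf{r},\mathbf{q})$ form, the log-sum inequality applied to the scaled pairs $(\lambda r_1,\lambda q_1)$ and $((1-\lambda)r_2,(1-\lambda)q_2)$ to obtain per-term joint concavity of $r\log(q/r)$, weighted summation against the fixed $p(y^n||x^n)$, and the smoothness and convex-domain observations. Your added remarks on boundedness and the strict positivity in $A_1$ are sensible refinements but do not change the argument.
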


Recall, that in the alternating maximization procedure we maximize over $\{r(x^n||y^{n-1}),\ q(x^n|y^n)\}$ instead of over $r(x^n||y^{n-1})$ alone, and thus need the following lemma.
\begin{lemma}\label{whyalt}.
For any discrete random variables $X^n,\ Y^n$, the following holds
\begin{align}
\max_{r(x^n||y^{n-1})}I(X^n\rightarrow Y^n)=\max_{r(x^n||y^{n-1}),q(x^n|y^n)}I(X^n\rightarrow Y^n).
\end{align}
The proof will be given after the following supplementary claim, in which we calculate the specific $q(x^n|y^n)$ that maximizes the directed information where $r(x^n||y^{n-1})$ is fixed.
\begin{lemma}\label{rfix}.
For fixed $r(x^n||y^{n-1})$, there exists $c_2(r)=q^{*}(x^n|y^n)$ that achieves $\max_{q(x^n|y^n)}I(X^n\rightarrow Y^n),$ and given by
\begin{align}
q^*(x^n|y^n)=\frac{r(x^n||y^{n-1})p(y^n||x^n)}{\sum_{x^n}{r(x^n||y^{n-1})p(y^n||x^n)}}.\nonumber
\end{align}
\begin{proof}[Proof for Lemma \ref{rfix}]
Let $\textbf{q}^*=q^*(x^n|y^n)$. For any $\textbf{q}=q(x^n|y^n)$, and fixed $\textbf{r}=r(x^n||y^{n-1})$
\begin{align}
&\cal{I}(\textbf{r},\textbf{q}^*)-\cal{I}(\textbf{r},\textbf{q})\nonumber\\
&=\sum_{x^n,y^n}{r(x^n||y^{n-1})p(y^n||x^n)\log{\frac{q^*(x^n|y^n)}{r(x^n||y^{n-1})}}}- \sum_{x^n,y^n}{r(x^n||y^{n-1})p(y^n||x^n)\log{\frac{q(x^n|y^n)}{r(x^n||y^{n-1})}}}\nonumber\\
&=\sum_{x^n,y^n}{r(x^n||y^{n-1})p(y^n||x^n)\log{\frac{q^*(x^n|y^n)}{q(x^n|y^n)}}}\nonumber\\
&=D\left(r(x^n||y^{n-1})p(y^n||x^n)\parallel q(x^n|y^n)\sum_{x^n}{r(x^n||y^{n-1})p(y^n||x^n)}\right)\nonumber\\
&\stackrel{(a)}{\geq} 0,\nonumber
\end{align}
where (a) follows from the non-negativity of the divergence.
\end{proof}
\end{lemma}
\textit{Proof of Lemma \ref{whyalt}.} After finding the PMF $\textbf{q}$ that maximizes $\cal{I}(\textbf{r},\textbf{q})$ where \textbf{r} is fixed, we can see that $q(x^n|y^n)$ is the one that corresponds to the joint distribution  $r(x^n||y^{n-1})p(y^n||x^n)$ in the sense that
\begin{align}
q(x^n|y^n)&=\frac{p(x^n,y^n)}{p(y^n)}\nonumber\\
&=\frac{p(x^n,y^n)}{\sum_{x^n}p(x^n,y^n)}\nonumber\\
&=\frac{r(x^n||y^{n-1})p(y^n||x^n)}{\sum_{x^n}{r(x^n||y^{n-1})p(y^n||x^n)}},\nonumber
\end{align}
and thus, the lemma is proven. \hfill\QED
\end{lemma}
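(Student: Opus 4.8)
The plan is to collapse the two-variable maximization on the right onto the single-variable maximization on the left by carrying out the inner maximization over $\textbf{q}$ first, using the supplementary claim just established in Lemma \ref{rfix}. The guiding observation is that for a fixed channel $p(y^n||x^n)$ the directed information in its original form, $I(X^n\rightarrow Y^n)=\sum_{y^n,x^n}p(y^n,x^n)\log\frac{p(y^n||x^n)}{p(y^n)}$, depends only on $r(x^n||y^{n-1})$, because the joint law is $p(x^n,y^n)=r(x^n||y^{n-1})p(y^n||x^n)$; this is exactly the objective appearing on the left-hand side.

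First I would fix an arbitrary $\textbf{r}=r(x^n||y^{n-1})$ and apply Lemma \ref{rfix}, which asserts that $\max_{\textbf{q}}\mathcal{I}(\textbf{r},\textbf{q})$ is attained at $q^*(x^n|y^n)=\frac{r(x^n||y^{n-1})p(y^n||x^n)}{\sum_{x^n}r(x^n||y^{n-1})p(y^n||x^n)}$. The crucial step is to read off that this maximizer is nothing but the posterior induced by the joint law $r(x^n||y^{n-1})p(y^n||x^n)$, that is $q^*=\frac{p(x^n,y^n)}{\sum_{x^n}p(x^n,y^n)}=\frac{p(x^n,y^n)}{p(y^n)}=p(x^n|y^n)$.

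Substituting $q^*$ back into $\mathcal{I}(\textbf{r},\textbf{q})$, the ratio inside the logarithm simplifies to $\frac{q^*(x^n|y^n)}{r(x^n||y^{n-1})}=\frac{p(y^n||x^n)}{p(y^n)}$, the $r$ factor cancelling by the chain rule $p(x^n,y^n)=r(x^n||y^{n-1})p(y^n||x^n)$. Since $p(y^n||x^n)r(x^n||y^{n-1})=p(x^n,y^n)$, this yields $\mathcal{I}(\textbf{r},q^*)=\sum_{y^n,x^n}p(x^n,y^n)\log\frac{p(y^n||x^n)}{p(y^n)}=I(X^n\rightarrow Y^n)$. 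Thus the inner maximization returns precisely the left-hand objective, $\max_{\textbf{q}}\mathcal{I}(\textbf{r},\textbf{q})=I(X^n\rightarrow Y^n)$ for every feasible $\textbf{r}$, and maximizing both sides over $\textbf{r}$ gives $\max_{\textbf{r},\textbf{q}}\mathcal{I}(\textbf{r},\textbf{q})=\max_{\textbf{r}}\max_{\textbf{q}}\mathcal{I}(\textbf{r},\textbf{q})=\max_{\textbf{r}}I(X^n\rightarrow Y^n)$, which is the asserted identity.

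I expect the only genuinely delicate point to be the bookkeeping of the causal-conditioning structure rather than any hard estimate. One should verify that even though $\textbf{r}$ ranges over \emph{causally} conditioned PMFs while $\textbf{q}$ ranges over \emph{arbitrary} conditional PMFs $q(x^n|y^n)$, the maximizer $q^*$ delivered by Lemma \ref{rfix} is automatically a legitimate full conditional PMF (being a genuine posterior), so no feasibility obstruction arises and the interchange $\max_{\textbf{r},\textbf{q}}=\max_{\textbf{r}}\max_{\textbf{q}}$ is justified.
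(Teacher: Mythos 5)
Your proposal is correct and follows essentially the same route as the paper: invoke Lemma \ref{rfix} to identify the inner maximizer $q^*$ as the posterior $p(x^n|y^n)$ induced by the joint law $r(x^n||y^{n-1})p(y^n||x^n)$, observe that substituting it back recovers the original directed information as a functional of $\textbf{r}$ alone, and then maximize over $\textbf{r}$. In fact you make explicit the back-substitution step (showing $\mathcal{I}(\textbf{r},\textbf{q}^*)=I(X^n\rightarrow Y^n)$) and the feasibility of $q^*$ in $A_2$, both of which the paper leaves implicit.
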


In the following lemma, we find an explicit expression for $\textbf{r}$ that achieves $\max_{r(x^n||y^{n-1})}I(X^n\rightarrow Y^n)$, where $\textbf{q}$ is fixed.
\begin{lemma}\label{qfix}.
For fixed $q(x^n|y^n)$, there exists $c_1(q)=r^{*}(x^n||y^{n-1})$ that achieves $\max_{r(x^n||y^{n-1})}I(X^n\rightarrow Y^n),$ and is given by the products:
$$r^{*}(x^n||y^{n-1})=\prod_{i=1}^n{r(x_i|x^{i-1},y^{i-1})},$$
where
\begin{equation}
r(x_i|x^{i-1},y^{i-1})=\frac{r'(x^i,y^{i-1})}{\sum_{x^i}r'(x^i,y^{i-1})},
\end{equation}
and
\begin{align}
r'(x^i,y^{i-1})=\prod_{x_{i+1}^n,y_{i}^n}{\left[\frac{q(x^n|y^n)}
{\prod_{j=i+1}^{n}{r(x_j|x^{j-1},y^{j-1})}}\right]^{\prod_{j=i}^{n}{p(y_j|x^{j},y^{j-1})}\prod_{j=i+1}^{n}{r(x_j|x^{j-1},y^{j-1})}}}.
\end{align}
\begin{proof}
In order to find the requested \textbf{r}, we find all of its components, namely $\{r(x_i|x^{i-1},y^{i-1})\}_{i=1}^{n}$, by maximizing the directed information over each of them. For convenience, let us use for short: $r_i\triangleq r(x_i|x^{i-1},y^{i-1})$, and $p_i\triangleq p(y_i|x^i,y^{i-1})$.
Since in Lemma \ref{isconcave} we showed that $I(X^n\rightarrow Y^n)$ is concave in $\{\textbf{r},\textbf{q}\}$ and the constraints of the optimization problem are affine, we can use the Lagrange multipliers method with the Karush-Kuhn-Tucker conditions \cite[Ch. 5.3.3]{Boyd}. We define the Lagrangian as:
\begin{align}
J=\sum_{x^n,y^n}{\left(p(y^n||x^n)\prod_{i=1}^{n}{r_i} \log\left(\frac{q(x^n|y^n)}{\prod_{j=1}^{n}{r_j}}\right)\right)}+ \sum_{i=1}^{n}{\left(\sum_{x^{i-1},y^{i-1}}\nu_{i,(x^{i-1},y^{i-1})}\left(\sum_{x_i}{r_i}-1\right)\right)}.\nonumber
\end{align}
Now, for every $i\in\{1,...,n\}$ we find $r_i$ s.t.,
\begin{align}
\frac{\partial J}{\partial r_i}&=\sum_{x_{i+1}^n,y_{i}^n}{\left(p(y^n||x^n)\prod_{j\neq i=1}^{n}{r_j} \left[\log{\frac{q(x^n|y^n)}{\prod_{j=1}^{n}{r_j}}}-1\right]\right)}+\nu_{i,(x^{i-1},y^{i-1})}\nonumber\\
&=\prod_{j=1}^{i-1}{r_{j}}\sum_{x_{i+1}^n,y_{i}^n}{\left(p(y^n||x^n)\prod_{j=i+1}^{n}{r_{j}} \left[\log{\frac{q(x^n|y^n)}{\prod_{j=i+1}^{n}{r_j}}}-\log{\prod_{j=1}^{i-1}{r_{j}}}-\log{r_i}-1\right]\right)}+\nu_{i,(x^{i-1},y^{i-1})}\nonumber\\
&=0.\nonumber
\end{align}
Note that since $\nu_i$ is a function of $(x^{i-1},y^{i-1})$ we can divide the whole equation by $\prod_{j=1}^{i-1}{r_j}$, and get a new $\nu^{*}_{i,(x^{i-1},y^{i-1})}$.\\
Moreover, we can see that three of the expressions in the sum, i.e., $\{\log{\prod_{j=1}^{i-1}{r_{j}}},\ \log{r_i},\  1\}$, do not depend on $(x_{i+1}^n,y_{i}^n)$, thus leaving their coefficient in the equation to be
\begin{align}
\sum_{x_{i+1}^n,y_{i}^n}\left[p(y^n||x^n)\prod_{j=i+1}^{n}{r(x_j|x^{j-1},y^{j-1})}\right]=\prod_{j=1}^{i-1}{p(y_j|x^j,y^{j-1})}.\nonumber
\end{align}
Hence we obtain:
\begin{align}
\log\left[\prod_{x_{i+1}^n,y_{i}^n}{\left(\frac{q(x^n|y^n)}{\prod_{j=i+1}^{n}{r_j}}\right)^{\frac{p(y^n||x^n)\prod_{j=i+1}^{n}{r_{j}}}{{\prod_{j=1}^{i-1}{p_j}}}}}\right]
-\log{r_i}-\log\nu^{**}_{i,(x^{i-1},y^{i-1})}=0,\nonumber
\end{align}
where $$\log{\nu^{**}_{i,(x^{i-1},y^{i-1})}}={\prod_{j=1}^{i-1}{p_j}}\left(1+\log{\prod_{j=1}^{i-1}{r_{j}}}\right)-\nu^{*}_{i,(x^{i-1},y^{i-1})}.$$ Finally, we are left with the expression:
\begin{equation}
r(x_i|x^{i-1},y^{i-1})=\frac{r'(x^i,y^{i-1})}{\sum_{x^i}r'(x^i,y^{i-1})},\nonumber
\end{equation}
where
\begin{align}
r'(x^i,y^{i-1})&=\prod_{x_{i+1}^n,y_{i}^n}{\left[\frac{q(x^n|y^n)}
{\prod_{j=i+1}^{n}{r(x_j|x^{j-1},y^{j-1})}}\right]^{\frac{p(y^n||x^n)\prod_{j=i+1}^{n}{r(x_j|x^{j-1},y^{j-1})}}{\prod_{j=1}^{i-1}{p(y_j|x^{j},y^{j-1})}}}}\nonumber\\
&=\prod_{x_{i+1}^n,y_{i}^n}{\left[\frac{q(x^n|y^n)}
{\prod_{j=i+1}^{n}{r(x_j|x^{j-1},y^{j-1})}}\right]^{\prod_{j=i}^{n}{p(y_j|x^{j},y^{j-1})}\prod_{j=i+1}^{n}{r(x_j|x^{j-1},y^{j-1})}}}.
\end{align}
We can see that for every $i$, $r_i$ depends on $q(x^n|y^n)$ and $\{r_{i+1},r_{i+2},...,r_{n}\}$, and $r_n$ is a function of $q(x^n|y^n)$ alone. Therefore, we can place $r_n$ in the function we have for $r_{n-1}$, thus making $r_{n-1}$ depend on $q(x^n|y^n)$ alone as well. Now we do the same for $r_{n-2}$ and so on until for all $i$, $r_i$ is dependent on $q(x^n|y^n)$ alone. We name this method \textit{Backwards maximization}.
Finally, we obtain $r(x^n||y^{n-1})=\prod_{i=1}^{n}{r_i}$ that maximizes the directed information where $q(x^n|y^n)$ is fixed, i.e., $c_1(q)$, and the lemma is proven.
\end{proof}
\end{lemma}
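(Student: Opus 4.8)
The plan is to treat $\max_{r(x^n||y^{n-1})} I(X^n\rightarrow Y^n)$ with $\textbf{q}$ fixed as a constrained concave maximization and to solve it via Lagrange multipliers. By \lemref{isconcave} the objective $\mathcal{I}(\textbf{r},\textbf{q})$ is concave in $\textbf{r}$, and the feasibility constraints---that each factor $r_i \triangleq r(x_i|x^{i-1},y^{i-1})$ be a legitimate conditional PMF, i.e. $\sum_{x_i} r_i = 1$ for every $(x^{i-1},y^{i-1})$, together with $r_i\ge 0$---are affine. Since a strictly feasible point exists (e.g. the uniform conditionals), the Karush--Kuhn--Tucker conditions are both necessary and sufficient for a global maximizer. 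I would therefore exploit the product structure $r(x^n||y^{n-1}) = \prod_{i=1}^n r_i$ and optimize factor by factor, forming a Lagrangian $J$ that adds to $\mathcal{I}(\textbf{r},\textbf{q})$ one multiplier $\nu_{i,(x^{i-1},y^{i-1})}$ per normalization constraint.

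Next I would differentiate. For each fixed tuple $(x_i,x^{i-1},y^{i-1})$ I would compute $\partial J/\partial r_i$, with care about the bookkeeping: differentiating the product prefactor $\prod_j r_j$ with respect to $r_i$ leaves $\prod_{j\neq i} r_j$, while the dependence of $\log\left(q(x^n|y^n)/\prod_j r_j\right)$ on $r_i$ contributes the additive $-1$. Setting $\partial J/\partial r_i = 0$ yields the stationarity equation. Because $\nu_{i,(x^{i-1},y^{i-1})}$ depends only on $(x^{i-1},y^{i-1})$, I can divide through by $\prod_{j=1}^{i-1} r_j$ and absorb it into a rescaled multiplier without loss.

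The key simplification is to notice that three of the summands---$\log\prod_{j<i} r_j$, $\log r_i$, and the constant---do not depend on the summation variables $(x_{i+1}^n,y_i^n)$. Their common coefficient is $\sum_{x_{i+1}^n,y_i^n} p(y^n||x^n)\prod_{j=i+1}^n r_j$, which, by the telescoping marginalization property of causal conditioning (using $\sum_{y_j} p(y_j|x^j,y^{j-1})=1$ and $\sum_{x_j} r(x_j|x^{j-1},y^{j-1})=1$ from the innermost index outward), collapses to $\prod_{j=1}^{i-1} p(y_j|x^j,y^{j-1})$. Dividing by this factor is precisely what reshapes the weights into the exponent stated in the lemma, and solving for $r_i$ while folding the leftover multiplier into the normalization $\sum_{x^i} r'(x^i,y^{i-1})$ produces the claimed closed form for $r(x_i|x^{i-1},y^{i-1})$.

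The main obstacle I anticipate is conceptual rather than computational: the stationarity equation determines $r_i$ only \emph{implicitly}, since its right-hand side still contains the later factors $\{r_{i+1},\dots,r_n\}$. I would resolve this by a backwards induction on $i$. At $i=n$ the expression for $r_n$ involves only the fixed $q(x^n|y^n)$; substituting it into the equation for $r_{n-1}$ makes $r_{n-1}$ a function of $q$ alone, and continuing down to $i=1$ renders every factor explicit in $q$. This is exactly the \emph{backwards maximization} order, and I must check it is well-defined---each step uses only factors already fixed by earlier (higher-index) steps---so that $\prod_{i=1}^n r_i$ is an unambiguous maximizer $c_1(q)=r^*(x^n||y^{n-1})$. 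Feasibility is immediate from the closed form, and concavity together with the KKT sufficiency guarantees that this stationary point is the global maximum.
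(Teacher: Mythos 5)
Your proposal is correct and follows essentially the same route as the paper's own proof: the same Lagrangian with one normalization multiplier per $(x^{i-1},y^{i-1})$, the same division by $\prod_{j<i}r_j$ and absorption into a rescaled multiplier, the same telescoping identity $\sum_{x_{i+1}^n,y_i^n}p(y^n||x^n)\prod_{j>i}r_j=\prod_{j<i}p_j$ to reshape the exponent, and the same backwards substitution from $i=n$ down to $i=1$. The only difference is cosmetic---you justify KKT sufficiency via strict feasibility, where the paper appeals directly to concavity and affine constraints.
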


Having Lemmas \ref{isconcave}-\ref{qfix} we can now state and prove our main theorem.
\begin{theorem}\label{Thc}.
For a fixed channel $p(y^n||x^n)$, there exists an alternating maximization procedure, such as
$I_L$ in Alg. \ref{algc},  to compute
\begin{align}
C_n=\frac{1}{n}\max_{p(x^n||y^{n-1})}I(X^n\rightarrow Y^n).\nonumber
\end{align}
\begin{proof}
To prove Theorem \ref{Thc}, we first have to show existence of a double maximization problem, i.e., an equivalent problem where we maximize over two variables instead of one, and this was shown in Lemma \ref{whyalt}. Now, in order for the alternating maximization procedure to work on this optimization problem, we need to show that the conditions given in Lemma \ref{lemconv} hold here, and this was shown in Lemma \ref{isconcave}, \ref{rfix} and \ref{qfix}. Thus, we have an algorithm for calculating
\begin{align}
C_n=\frac{1}{n}\max_{r(x^n||y^{n-1})}I(X^n\rightarrow Y^n)\nonumber
\end{align}
that is equal to $\lim_{k\rightarrow\infty}I_L(k)$, where $I_L(k)$ is the value of $I_L$ in the $k$th iteration as in Alg. \ref{algc}. Hence, the theorem is proven.
\end{proof}
\end{theorem}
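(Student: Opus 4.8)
The plan is to assemble the four preceding lemmas into the hypotheses of the convergence result, Lemma~\ref{lemconv}, and then read off convergence of the sequence $I_L(k)$ to the target value $C_n$. The overall architecture mirrors the classical justification of the Blahut--Arimoto algorithm, except that the single optimization variable is replaced by the pair $(\textbf{r},\textbf{q})$, and the coordinate maximizer over $\textbf{r}$ is computed by a backward sweep rather than in a single normalization.

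First I would reduce the stated single-variable problem $\max_{r(x^n||y^{n-1})}I(X^n\rightarrow Y^n)$ to the double-variable problem $\max_{r,q}I(X^n\rightarrow Y^n)$, which is exactly the content of Lemma~\ref{whyalt}. This is the step that legitimizes treating the directed information as a function $\mathcal{I}(\textbf{r},\textbf{q})$ of two independent arguments and hence makes the alternating-maximization machinery applicable at all; without it there is no second coordinate to alternate over.

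Second, I would verify each hypothesis of Lemma~\ref{lemconv} for $f=\mathcal{I}(\textbf{r},\textbf{q})$ with $A_1,A_2$ as in (\ref{sets}). Concavity, continuity, continuous partial derivatives, and convexity of $A_1$ and $A_2$ are all supplied by Lemma~\ref{isconcave}. Boundedness from above is immediate from $I(X^n\rightarrow Y^n)\le H(Y^n)\le n\log|\mathcal{Y}|$. The existence of the coordinate-wise maximizers is the remaining requirement: Lemma~\ref{rfix} produces $c_2(\textbf{r})=\textbf{q}^{*}\in A_2$ for every fixed $\textbf{r}$, and Lemma~\ref{qfix} produces $c_1(\textbf{q})=\textbf{r}^{*}\in A_1$ for every fixed $\textbf{q}$. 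With all hypotheses in place, Lemma~\ref{lemconv} yields $\lim_{k\to\infty}\mathcal{I}(\textbf{r}^{k},\textbf{q}^{k})=\mathcal{I}^{*}$; dividing by $n$ and identifying $I_L(k)=\frac{1}{n}\mathcal{I}(\textbf{r}^{k},\textbf{q}^{k})$ together with $C_n=\frac{1}{n}\mathcal{I}^{*}$ finishes the argument.

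The main obstacle is the membership claim $c_1(\textbf{q})\in A_1$. Unlike the ordinary BAA, where the $r$-update need only land in the probability simplex (a single normalization constraint), here $A_1$ is the set of strictly positive causally conditioned PMFs, which must factor as $\prod_{i=1}^{n}r(x_i|x^{i-1},y^{i-1})$ with each factor normalized separately, i.e.\ under $n$ coupled constraints. The content of Lemma~\ref{qfix} is precisely that the backward sweep $i=n,n-1,\dots,1$ returns, for fixed $\textbf{q}$, a genuine product of conditional PMFs and therefore a valid element of $A_1$ attaining the coordinate maximum. I would emphasize that it is this structural fact, and not any new analytic estimate, that makes $c_1$ well defined on $A_1$ and hence makes the alternating procedure fall within the scope of Lemma~\ref{lemconv}; this is the crux of the theorem.
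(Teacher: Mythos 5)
Your proposal is correct and follows essentially the same route as the paper's proof: reduce to the double maximization via Lemma~\ref{whyalt}, verify the hypotheses of Lemma~\ref{lemconv} through Lemmas~\ref{isconcave}, \ref{rfix}, and \ref{qfix}, and conclude that $I_L(k)$ converges to $C_n$. Your added remarks on boundedness from above and on why $c_1(\textbf{q})\in A_1$ is the delicate membership claim are consistent with, and slightly more explicit than, the paper's own argument.
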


Our last step in proving the convergence of Alg. \ref{algc} is to show why $I_U$ is a tight upper bound. For that reason we state the following theorem.
\begin{theorem}\label{Thupbound}.
For the value of $C_n=\frac{1}{n}\max_{p(x^n||y^{n-1})}I(X^n\rightarrow Y^n)$, the inequality
\begin{align}\label{upboundeq}
C_n\leq I_U,
\end{align}
where
\begin{align}
I_U=\frac{1}{n}\min_r\max_{x_1}\sum_{y_1}\max_{x_2}\cdots\max_{x_n}\sum_{y_{n}}p(y^n||x^n)\log\frac{p(y^n||x^n)}{\sum_{x'^n}p(y^n||x'^n)\cdot
r(x'^n||y^{n-1})}\nonumber
\end{align}
holds. Furthermore, if $r(x^n||y^{n-1})$ achieves $C_n$, then we have
equality in (\ref{upboundeq}).
\end{theorem}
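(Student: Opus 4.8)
The plan is to adapt the classical Arimoto-style upper bound for channel capacity to the causal, nested structure appearing in $I_U$. For an arbitrary causally conditioned pmf $r=r(x^n||y^{n-1})$ I introduce the induced output pmf $p_r(y^n)=\sum_{x'^n}p(y^n||x'^n)r(x'^n||y^{n-1})$ (the denominator inside $I_U$) and abbreviate $g_r(x^n,y^n)=\log\frac{p(y^n||x^n)}{p_r(y^n)}$, so that the bracketed object in $I_U$ is the nested quantity $\max_{x_1}\sum_{y_1}\cdots\max_{x_n}\sum_{y_n}p(y^n||x^n)g_r(x^n,y^n)$. Let $r^*$ achieve $C_n$, with induced output $p^*(y^n)$. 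Combining the definition of directed information with the chain rule $p(x^n,y^n)=r^*(x^n||y^{n-1})p(y^n||x^n)$ (as in the proof of Lemma \ref{isconcave}) gives the representation
\[
nC_n=\sum_{x^n,y^n}r^*(x^n||y^{n-1})\,p(y^n||x^n)\log\frac{p(y^n||x^n)}{p^*(y^n)}.
\]

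First I would prove the \emph{averaged} lower bound: for every $r$,
\[
nC_n\le\sum_{x^n,y^n}r^*(x^n||y^{n-1})\,p(y^n||x^n)\,g_r(x^n,y^n).
\]
This follows by subtracting the representation above from the right-hand side; since $\log\frac{p^*(y^n)}{p_r(y^n)}$ depends only on $y^n$ and $\sum_{x^n}r^*(x^n||y^{n-1})p(y^n||x^n)=p^*(y^n)$, the difference collapses to $D\!\left(p^*(y^n)\,\|\,p_r(y^n)\right)\ge 0$. This is the exact analogue of the memoryless step in which replacing the true output pmf by any other output pmf can only increase the averaged log-likelihood ratio.

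Second, and this is the technical core, I would show that this causal average is dominated by the nested causal maximum. Expanding $r^*(x^n||y^{n-1})=\prod_i r^*(x_i|x^{i-1},y^{i-1})$ and $p(y^n||x^n)=\prod_i p(y_i|x^i,y^{i-1})$, both the average and the nested max have the same layered form and differ only in that each inner $\sum_{x_i}r^*(x_i|x^{i-1},y^{i-1})$ is replaced by $\max_{x_i}$. Defining backward cost-to-go functions $V_i(x^i,y^{i-1})$ (with the maximum) and $W_i(x^i,y^{i-1})$ (with the $r^*$-average), which coincide at the innermost layer $i=n$, a backward induction based on the elementary inequality $\sum_{x}r(x)f(x)\le\max_x f(x)$ yields $W_i\le V_i$ for all $i$, and therefore $\sum_{x_1}r^*(x_1)W_1\le\max_{x_1}V_1$. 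Chaining this with the averaged bound gives $nC_n\le(\text{nested max of }g_r)$ for every $r$; minimizing over $r$ establishes $C_n\le I_U$, which proves (\ref{upboundeq}).

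For the equality claim I would set $r=r^*$, so that $p_r=p^*$, the divergence vanishes, and the causal average equals exactly $nC_n$; since the nested max always dominates the average, it remains only to prove the reverse inequality at the optimizer. Here I would invoke the Karush--Kuhn--Tucker stationarity conditions that characterize $r^*$ and were derived in Lemma \ref{qfix}: they force the per-stage reward-to-go to be constant (equal to the associated Lagrange value) on the support of each $r^*(x_i|x^{i-1},y^{i-1})$ and no larger off the support, so every inequality $\sum_{x_i}r^*(x_i|\cdots)(\cdot)\le\max_{x_i}(\cdot)$ in the backward induction becomes tight and $V_i=W_i$ on the relevant support. This collapses the nested max at $r^*$ to $nC_n$, giving equality in (\ref{upboundeq}). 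I expect this last step to be the main obstacle: the upper-bound direction is routine once the ``average dominated by causal max'' induction is in place, whereas the equality requires faithfully translating the KKT conditions of Lemma \ref{qfix} into the statement that the sequential maximization does not overshoot the causal average at the capacity-achieving $r^*$.
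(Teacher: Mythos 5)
Your proposal is correct and follows essentially the same route as the paper: the divergence identity $\sum r^{*}p\,g_r - nC_n = D(p^{*}\,\|\,p_r)\ge 0$ is the paper's Lemma \ref{Upbound1}, the backward induction replacing each $\sum_{x_i}r^{*}(x_i|\cdot)$ by $\max_{x_i}$ is Lemma \ref{Upbound2}, and the KKT argument showing the stagewise reward-to-go is constant on the support of the optimizer (so the averages meet the maxima) is Lemma \ref{Upbound2tight}. The only cosmetic difference is that the paper re-derives the KKT conditions for the single-variable problem in the appendix rather than citing those of Lemma \ref{qfix}, and states the averaged bound for a general $r_1$ before specializing to the capacity-achieving one.
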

The proof is given in Appendix \ref{proofThUp} for the general case of delay $d$. We also omit the proof of the upper bound for the case where the feedback is a deterministic function of the delayed output, as described in Appendix \ref{appfd}.
\end{section}

\begin{section}{Numerical examples for calculating feedback channel's capacities} \label{secEx}
In this section we present some examples of Alg. \ref{algc} performances over various channels. We start with a memoryless channel to see whether feedback improves the capacity of such channels, and continue with specific FSCs such as the Trapdoor channel and the Ising channel.
Since Alg. \ref{algc} is applicable on Finite State Channels (FSC), we describe this class of such channels and their properties. Gallager \cite{Gal} defined the FSC as one in which the influence of the previous input and output sequence, up to a given point, may be summarized using a \textit{state} with finite cardinality. The FSC is stationary and characterized by the conditional PMF $p(y_i,s_i|x_i,s_{i-1})$ that satisfies
\begin{align}
p(y_i,s_i|x^i,y^{i-1},s^{i-1})=p(y_i,s_i|x_i,s_{i-1}),\nonumber
\end{align}
and the initial state $p(s_0)$.

The causal conditioning probability of the output given the input is given by
\begin{align}
p(y^n||x^n,s_0)=\sum_{s^n}\prod_{i=1}^{n}p(y_i,s_i|x_i,s_{i-1}),\nonumber
\end{align}
and
\begin{align}
p(y^n||x^n)=\sum_{s_0}p(y^n||x^n,s_0)p(s_0).\nonumber
\end{align}
Note that a memoryless channel, i.e., the output at any given time is dependent on the input at that time alone, is an FSC with one state.

It was shown in \cite{PermuterWeissmanGoldsmith} that the capacity of an FSC with feedback is bounded between
\begin{align}
\underline{C}_N-\frac{\log|\mathcal{S}|}{N}\leq C_N\leq \overline{C}_N+\frac{\log|\mathcal{S}|}{N},\label{bound}
\end{align}
where
\begin{align}
\overline{C}=\frac{1}{N}\max_{p(x^n||y^{n-1})}\max_{s_0}I(X^n\rightarrow Y^n|s_0),\\
\underline{C}=\frac{1}{N}\max_{p(x^n||y^{n-1})}\min_{s_0}I(X^n\rightarrow Y^n|s_0).
\end{align}
If we require that the probability of error tends to zero for every initial state $s_0$, then
\begin{align}
C=\lim_{n\rightarrow\infty}\underline{C}.\nonumber
\end{align}
Since these bounds are obtained via maximization of the directed information, we can calculate them using Alg. \ref{algc} as presented in Section \ref{secDescription}, thus estimating the capacity.

Our first example shows the convergence of Alg. \ref{algc} to the analytical capacity of a memoryless channel.
\begin{subsection}{Binary Symmetric Channel}
Consider a memoryless BSC with probability of $p=0.3$ as in Fig. \ref{BSC}.
\begin{figure}[h!]{
  \psfrag{q}[][][0.8]{$\ \;0.3$}\psfrag{p}[][][0.8]{$0.7$}  \psfrag{X}[][][1]{$X$}\psfrag{Y}[][][1]{$Y$}
  \psfrag{0}[][][0.8]{$0$}  \psfrag{1}[][][0.8]{$1$}
 \centerline{ \includegraphics[width=4cm]{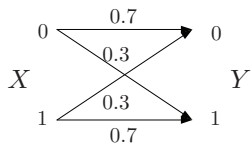}}
 \caption{Binary Symmetric Channel}
 \label{BSC}
}\end{figure} The capacity of this BSC is known to be $C=1-H(0.3)=0.1187$. In Fig. 2 we present
the directed information upper $I_U$ and lower $I_L$ bounds as a function of the iteration (as
given in Alg. \ref{algc}) and compare it to the capacity that is known analytically. Shannon
showed \cite{Shannon2} that for memoryless channels, feedback does not increase the capacity.
Thus, we can expect the numerical solution given in Alg. \ref{algc} to achieve the same value as
in the no-feedback case.
\begin{figure}[h!]{
  \begin{center}
  \psfrag{L}[][][0.8]{\ \ \ \ \ \ \ \ \ \ \ \ \ \ \ \ {\color{blue} Lower bound, $I_L$}}\psfrag{U}[][][0.8]{\ \ \ \ \ \ \ \ \ \ \ \ \ \ \ \ {\color{green} Upper bound, $I_U$}}
  \psfrag{T}[][][0.8]{\ \ \ \ \ \ \ \ \ \  {\color{red} True capacity}} \psfrag{N}[][][1]{Iteration}\psfrag{v}[][][1]{Value}
  \centerline{\includegraphics[width=6cm]{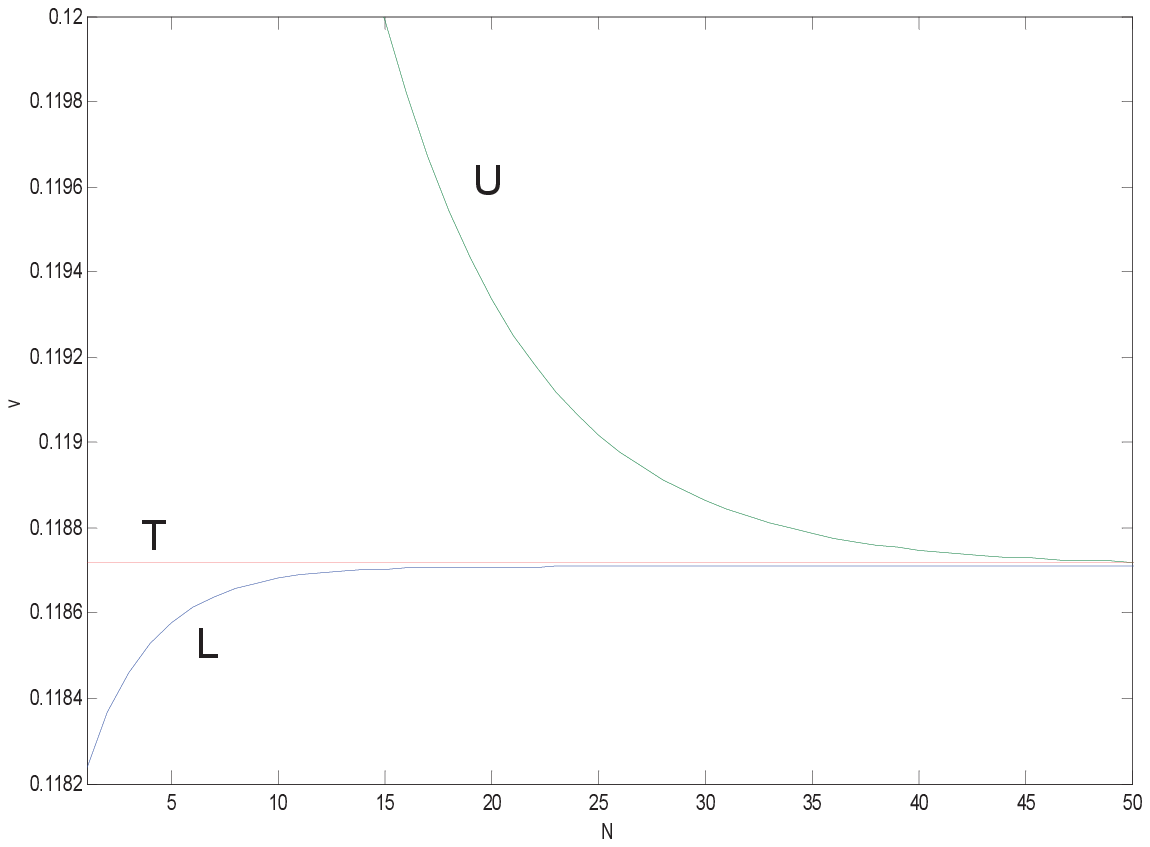}}\ \ \ \ \ \ \ \
  \caption{Performance of Alg. \ref{algc} over BSC(0.3).
  The lower and upper lines are the bounds in each iteration in Alg. \ref{algc}, whereas the horizontal line is the analytical calculation of the capacity.}
  \end{center}
  \label{Bingra}
}\end{figure}
Indeed, we can see that as the iterations number increases, the algorithm approaches the true value and converges. Furthermore, the causally conditioned probability $r(x^n||y^{n-1})$ that Alg. \ref{algc} achieves is actually $r(x^n)$, i.e., does not depend on the feedback. We note here that we can achieve the capacity of the channel using a uniform distribution or $r(x^n)$. This does not imply that there is only one optimum distribution, and indeed the one that Alg. \ref{algc} achieves is not uniform.
\end{subsection}

\begin{subsection}{Trapdoor Channel}
\begin{subsubsection}{Trapdoor channel with 2 states}
The trapdoor channel was introduced by David Blackwell in 1961 \cite{BlackwellTDC} and later on by Ash \cite{Ash}.
\begin{figure}[h!]{
  \psfrag{A}[][][1]{Input}\psfrag{B}[][][1]{Channel}
  \psfrag{C}[][][1]{Output}
 \centerline{ \includegraphics[width=6cm]{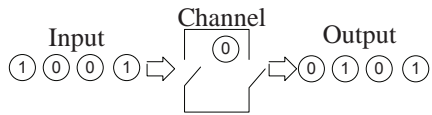}}
 \caption{Trapdoor Channel \cite{Ash}}
}\end{figure}
One can look at this channel as such: Consider a binary channel modulated by a box that contains a single bit referred to as the state. In every step, an input bit is fed to the channel, which then transmits either that bit or the one already contained in the box, each with probability $\frac{1}{2}$. The bit that was not transmitted remains in the box for future steps as the state of the channel.
The state, thus, is the bit in the box, and since it can be '0' or '1', we conclude that $|\mathcal{S}|=2$, or $\log|\mathcal{S}|=1$.

In order to use Alg. \ref{algc}, we first have to calculate the channel probability $p(y^n||x^n,s_0)$. For that purpose, we find $p(y_i|x^i,y^{i-1},s_0)$ analytically.
Note that $p(y_i|x^i,y^{i-1},s_0)=p(y_i|x_i,s_{i-1})$. Thus, first we find the deterministic function for $s_{i-1}$ given the past input, output, and initial state, i.e., $(x^{i-1},y^{i-1},s_0)$, and then the function for $p(y_i|x^i,y^{i-1},s_0)=p(y_i|x_i,s_{i-1})$.
\begin{table}[h]
\begin{minipage}[b]{0.5\linewidth}
\caption{$s_{i-i}$ as a function of $x_{i-1}$, $s_{i-2}$ and $y_{i-1}$}
\centering
\begin{tabular}{|c |c |c|| c|}
\hline
$x_{i-1}$ & $s_{i-2}$ & $y_{i-1}$ &  $s_{i-1}$ \\ [0.5ex]
\hline
0 & 0 & 0 & 0 \\
0 & 0 & 1 & $\phi$ \\
0 & 1 & 0 & 1 \\
0 & 1 & 1 & 0 \\
1 & 0 & 0 & 1 \\
1 & 0 & 1 & 0 \\
1 & 1 & 0 & $\phi$ \\
1 & 1 & 1 & 1 \\
\hline
\end{tabular}
\label{TT1} 
\end{minipage}
\begin{minipage}[b]{0.5\linewidth}
\caption{$p(y_i|s_{i-1},x_i)$}
\centering
\begin{tabular}{|c| c| c|| c|}
\hline
$x_i$ & $s_{i-1}$ & $y_i$ &  $p(y_i|x^i,y^{i-1},s_0=0)$ \\ [0.5ex]
\hline
0 & 0 & 0 & 1 \\
0 & 0 & 1 & 0 \\
0 & 1 & 0 & 0.5 \\
0 & 1 & 1 & 0.5 \\
1 & 0 & 0 & 0.5 \\
1 & 0 & 1 & 0.5 \\
1 & 1 & 0 & 0 \\
1 & 1 & 1 & 1 \\
\hline
\end{tabular}
\label{TT2}
\end{minipage}
\end{table}
An examination of the truth table in Table \ref{TT1} yields the formula for $s_{i-1}$ as
\begin{eqnarray}
s_{i-1}&=&x_{i-1}\oplus y_{i-1}\oplus s_{i-2}\nonumber\\
&=&\bigoplus_{m=1}^{m=i-1}(x_m\oplus y_m)\oplus s_0.\nonumber
\end{eqnarray}
Note that in Table \ref{TT1}, the input series $(0,0,1)$ and $(1,1,0)$ are not possible since the output is not one of the bits in the box; thus we may assign to $s_{i-1}$ whatever value we choose, in order to simplify the formula.
As for the conditional probability $p(y_i|x^i,y^{i-1},s_0)$, we assume that $s_0=0$, and because of the channel's symmetry the outcome for $s_0=1$ is easily calculated. Looking at Table \ref{TT2}, we can see that the formula for $p(y_i|x^i,y^{i-1},s_0=0)$ is given by
\begin{align}
p(y_i|x^i,y^{i-1},s_0=0)=\frac{1}{2}(x_i\oplus s_{i-1})+\overline{(x_i\oplus s_{i-1})}\wedge\overline{(x_i\oplus y_i)},\nonumber
\end{align}
where we know that $s_{i-1}$ is a function of $(x^{i-1},\ y^{i-1},\ s_o)$, and $\wedge$ denotes AND.

Now that we have $p(y^n||x^n,s_0=0)$, we use Alg. \ref{algc} for estimating the capacity of the channel as we run the algorithm to find the upper and lower bound for every $n\in\{1..12\}$, where
\begin{eqnarray}
\overline{C}_n=\max_{s_0}\max_{r(x^n||y^{n-1})}\frac{1}{n}I(X^n\rightarrow Y^n|s_0)+\frac{1}{n},\label{cupp}\\
\underline{C}_n=\max_{r(x^n||y^{n-1})}\min_{s_0}\frac{1}{n}I(X^n\rightarrow Y^n|s_0)-\frac{1}{n}.\label{clow}
\end{eqnarray}
Note that (\ref{cupp}) is calculated via Alg. \ref{algc} and $s_0=0$ due to the channel's symmetry. However, calculating (\ref{clow}) is more difficult, since we have to maximize over all the probabilities $r(x^n||y^{n-1})$, and at the same time minimize over the initial state. Hence, we use another lower bound denoted by $\underline{C}^{*}$, for which $r(x^n||y^{n-1})$ is fixed and is the one that achieves the maximum at (\ref{cupp}), and we only minimize over $s_0$. Clearly, $\underline{C}^{*}\leq\underline{C}$. Fig. \ref{TDgraph} presents the capacity estimation, and the upper and lower bound, as a function of the block length $n$.
\begin{figure}[h!]{
\psfrag{A}[][][0.8]{$\overline{C}_n$}  \psfrag{B}[][][0.8]{$C_n$} \psfrag{C}[][][0.8]{$\underline{C}^*_n$} \psfrag{N}[][][1]{$n$}\psfrag{v}[][][1]{Value}\psfrag{D}[][][0.8]{ True cap.}
 \centerline{ \includegraphics[width=6cm]{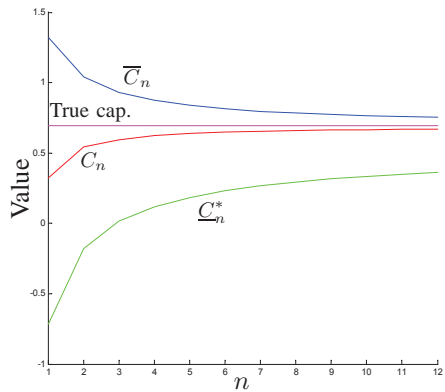}}
  \caption{Plot of $\overline{C}_n,\ C_n,\ \underline{C}^{*}_n$ and the true capacity of the trapdoor channel with 2 states and feedback with delay 1.}
 \label{TDgraph}
}\end{figure}
In \cite{PermuterPaulBenTsachy}, the capacity of the trapdoor channel is calculated analytically, and given by
\begin{align}
C=\lim_{n\rightarrow\infty}C_n=\log\left(\frac{1+\sqrt{5}}{2}\right)\approx 0.69424191.\label{CapTD}
\end{align}
We see from the simulation that the upper and lower bounds of the capacity approach the limit in (\ref{CapTD}), and the estimated capacity at block length $n=12$ is $C_{12}=0.6706533$.
\end{subsubsection}
\begin{subsubsection}{Directed information rate as a different estimator for the capacity}
We now consider an estimator to the feedback capacity of an FSC by calculating
$(n+1)C_{n+1}-nC_n$. The justification for this estimator is based on the following lemma.
\begin{lemma}.
If $\lim_{n\rightarrow\infty}I(X^n;Y_n|Y^{n-1})$ exists, then
\begin{align}
\lim_{n\rightarrow\infty}\frac{1}{n}I(X^n\rightarrow Y^n)=\lim_{n\rightarrow\infty}\left(I(X^n\rightarrow Y^n)-I(X^{n-1}\rightarrow Y^{n-1})\right),\nonumber
\end{align}
i.e.,
\begin{align}
\lim_{n\rightarrow\infty}C_n=\lim_{n\rightarrow\infty}(n+1)C_{n+1}-nC_n.\nonumber
\end{align}
\begin{proof}
If we suppose that the limit above exists, then
\begin{align}
\lim_{n\rightarrow\infty}\left(I(X^n\rightarrow Y^n)-I(X^{n-1}\rightarrow Y^{n-1})\right) &=\lim_{n\rightarrow\infty}I(X^n;Y_n|Y^{n-1})\nonumber\\
&\stackrel{(a)}{=}\lim_{n\rightarrow\infty}\frac{1}{n}\sum_{i=1}^nI(X^i;Y_i|Y^{i-1})\nonumber\\
&=\lim_{n\rightarrow\infty}\frac{1}{n}I(X^n\rightarrow Y^n),\nonumber
\end{align}
where (a) follows from the fact that if the limit of the sequence $\{a_n\}$ exists, then the average of the sequence converges to the same limit.
Further, a result from \cite{Kramer1} provides that if the joint process $\{X_i,Y_i\}$ is stationary, then the limit $\lim_{n\rightarrow\infty}I(X^n;Y_n|Y^{n-1})$ exists.
\end{proof}
\end{lemma}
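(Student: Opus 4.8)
The plan is to reduce the statement to the telescoping structure of directed information together with the elementary fact that the Cesàro mean of a convergent sequence converges to the same limit. First I would establish the single-letter decomposition
$$I(X^n\to Y^n)=\sum_{i=1}^{n}I(X^i;Y_i|Y^{i-1}),$$
which follows directly from the definition (\ref{directed1}): inside the logarithm one expands $p(y^n||x^n)=\prod_{i=1}^{n}p(y_i|y^{i-1},x^i)$ and $p(y^n)=\prod_{i=1}^{n}p(y_i|y^{i-1})$, so the log factorizes into a sum over $i$, and taking expectations term by term produces the claimed sum of conditional mutual informations. Telescoping this decomposition immediately gives the increment identity
$$I(X^n\to Y^n)-I(X^{n-1}\to Y^{n-1})=I(X^n;Y_n|Y^{n-1}),$$
so the right-hand side appearing in the lemma is exactly the quantity $a_n\triangleq I(X^n;Y_n|Y^{n-1})$ whose limit is assumed to exist; denote it by $L$.

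The averaging step is then the crux of the argument but is entirely routine. Because the decomposition says $\tfrac{1}{n}I(X^n\to Y^n)=\tfrac{1}{n}\sum_{i=1}^{n}a_i$, the hypothesis $a_n\to L$ yields \emph{both} conclusions at once: the increment is $a_n\to L$, while the normalized directed information, being the Cesàro average of the same sequence, also tends to $L$. Hence the two limits in the lemma coincide. To recast this in the $C_n$ form I would use the identification $C_n=\tfrac{1}{n}I(X^n\to Y^n)$, i.e. $nC_n=I(X^n\to Y^n)$, which turns the increment into $(n+1)C_{n+1}-nC_n=I(X^{n+1}\to Y^{n+1})-I(X^n\to Y^n)=I(X^{n+1};Y_{n+1}|Y^n)=a_{n+1}\to L$, matching the stated expression. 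Finally, to make the hypothesis concrete in the finite-state-channel setting, I would invoke the result of Kramer \cite{Kramer1} that stationarity of the joint process $\{X_i,Y_i\}$ guarantees that $\lim_{n\to\infty}I(X^n;Y_n|Y^{n-1})$ exists.

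The analytic content is genuinely light: once the telescoping identity is in hand, everything is the Cesàro lemma. The only points that deserve care are the bookkeeping in the identification $C_n=\tfrac{1}{n}I(X^n\to Y^n)$ (i.e. that we are evaluating the directed information along the relevant process rather than re-optimizing at each block length) and, more substantively, the verification that the standing hypothesis actually holds. The latter is where the real work sits, and I would offload it entirely onto the stationarity assumption and the cited convergence result, so that the lemma is ultimately a clean consequence of telescoping plus averaging.
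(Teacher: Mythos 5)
Your proposal is correct and follows essentially the same route as the paper: telescope the chain-rule decomposition $I(X^n\rightarrow Y^n)=\sum_{i=1}^{n}I(X^i;Y_i|Y^{i-1})$ to identify the increment with $I(X^n;Y_n|Y^{n-1})$, apply the Ces\`aro-mean lemma to equate the two limits, and invoke Kramer's stationarity result for existence of the hypothesized limit. The one caveat you flag --- that $C_n$ involves a re-optimization at each block length, so the telescoping identity strictly applies to a fixed process rather than to the sequence of maxima --- is a gap the paper itself glosses over, and you handle it no worse than the original.
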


Fig. \ref{TDdiffgraph} presents the directed information rate estimator using the lemma above, and its comparison to the true capacity.
\begin{figure}[h!]{
\psfrag{N}[][][1]{$n$}\psfrag{v}[][][1]{Value}\psfrag{A}[][][0.8]{$\ \ \ \ \ \ \ \ \ \ \ \ \ \ \ \ \ \ \ \ \ (n+1)C_{n+1}-nC_n$}\psfrag{B}[][][0.8]{\ \ \ \ \  True cap.}
 \centerline{ \includegraphics[width=5cm]{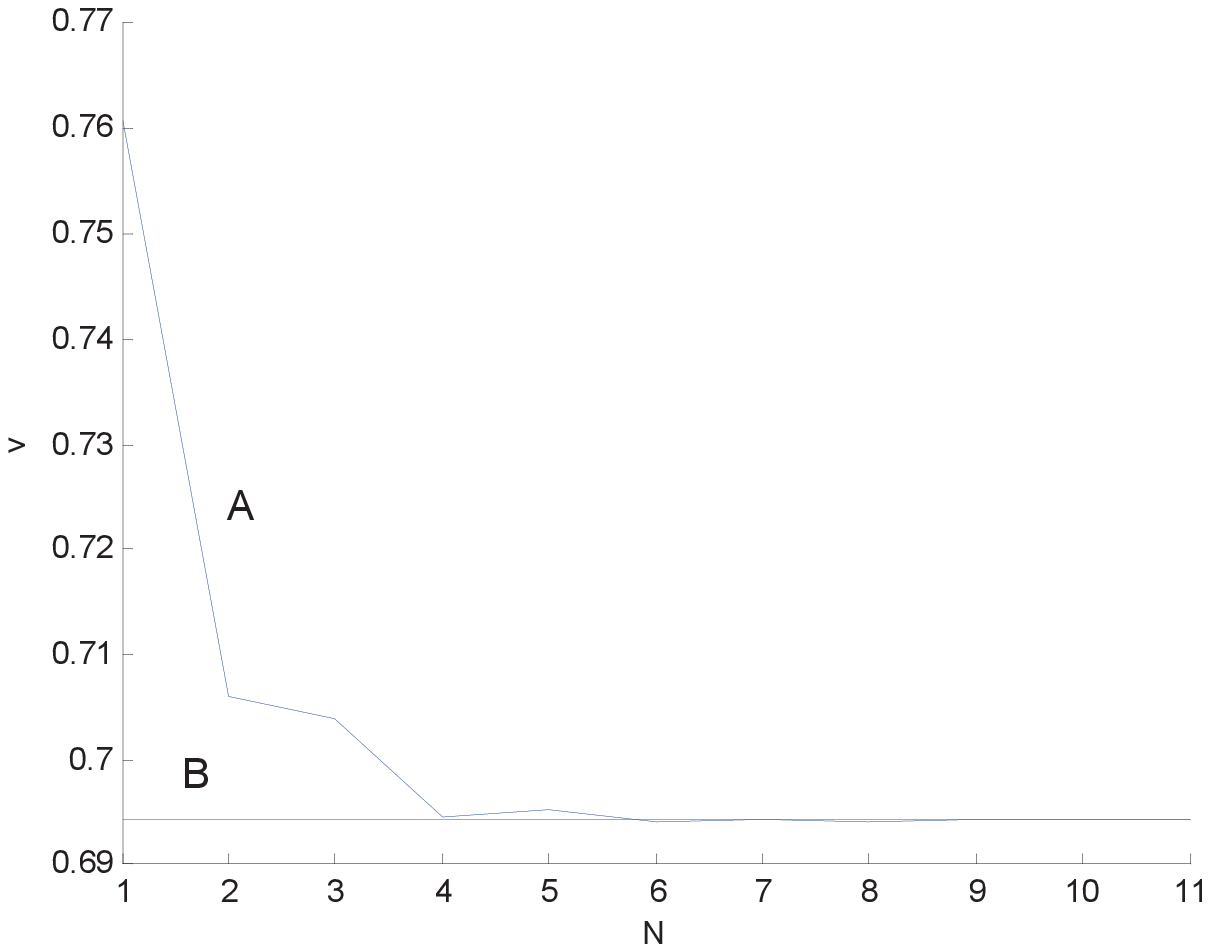}}
 \caption{The upper line is $(n+1)C_{n+1}-nC_n$ calculated using Alg. \ref{algc} and the horizontal line is the analytical calculation, for the trapdoor channel with 2 states and feedback with delay 1.}
 \label{TDdiffgraph}
}\end{figure}
One can see that the convergence of $(n+1)C_{n+1}-nC_n$ is faster than $C_n$ and the upper and lower bounds as seen in Fig. \ref{TDgraph}, and achieves the value $0.6942285$ when we calculate the $11^{th}$ difference. Furthermore, the convergence of the directed information rate stabilizes faster.
\end{subsubsection}
\begin{subsubsection}{M-State Trapdoor channel}
We generalize the trapdoor channel to an M-state one.
\begin{figure}[h!]{
  \psfrag{A}[][][1]{Input}\psfrag{B}[][][1]{Channel}
  \psfrag{C}[][][1]{Output}\psfrag{D}[][][1]{$\cdots$}\psfrag{N}[][][1]{m cells}
 \centerline{ \includegraphics[width=6cm]{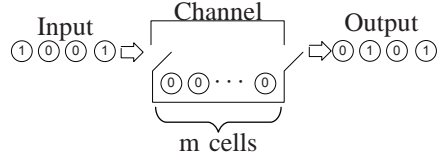}}
 \caption{Trapdoor channel with M states.}
 \label{MstateTD}
}\end{figure}
In the previous example we had $M=2$ cells in the box, one for the state bit, and one for the input bit. One can consider the state to be the number of '1's in the channel before a new input is inserted. We can expand this notation, by letting the 'box' contain more than 2 cells as presented in Fig. \ref{MstateTD}. Here, the state at any given time will express the number of $1'$s that are in the box at that time, and each cell has even probability to be chosen for the output. In this case, $M$ cells in the box are equivalent to $M$ states of the channel.
By that definition we can see that the state $s_{i-1}$ as a function of past input, output, and the initial state is given by
\begin{eqnarray}
s_{i-1}&=&x_{i-1}+s_{i-2}-y_{i-1}\nonumber\\
&=&s_0+\sum_{j=1}^{i-1}(x_j-y_j).\nonumber
\end{eqnarray}
Moreover, for calculating the channel probability $p(y_i=1|x^i,y^{i-1},s_0)$, we add $s_{i-1}$ to $x_i$ and divide the sum by the number of cells, i.e.,
\begin{align}
p(y_i=1|x^i,y^{i-1},s_0)=\frac{s_{i-1}+x_i}{m}.\nonumber
\end{align}
Now that we have $p(y^n||x^n,s_0)$, we use Alg. \ref{algc} for calculating  $C_n$ for every
$n\in\{1,2,...,12\}$. Fig. \ref{TD3graph} presents the directed information rate estimator
$(n+1)C_{n+1}-nC_n$ for the trapdoor channel with $M=3$ cells.
\begin{figure}[h!]{
\psfrag{N}[][][1]{$n$}\psfrag{V}[][][1]{Value}
\centerline{\includegraphics[width=5cm]{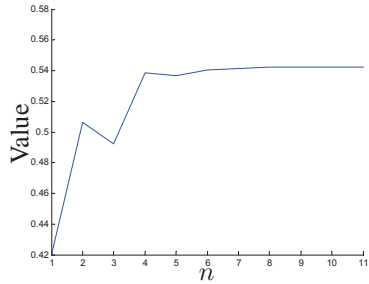}}
\caption{Plot of $(n+1)C_{n+1}-nC_n$ for the trap door channel with 3 cells and feedback with delay 1.}
 \label{TD3graph}
}\end{figure} Note, that in Fig. \ref{TD3graph} we achieve the value $0.5423984$ in the $11^{th}$
difference, thus we can assume that the capacity of a 3-state trapdoor channel is approximately
$0.542$.
\end{subsubsection}
\begin{subsubsection}{Influence of the number of cells on the capacity}
To summarize the trapdoor channel example, we examine the way the number of cells affects the capacity.
\begin{figure}[h!]{
\psfrag{C}[][][0.8]{$C_{12}$} \psfrag{D}[][][0.8]{$\ \ \ \ \ \ \ \ 12C_{12}-11C_{11}$}
\psfrag{m}[][][1]{Cells}\psfrag{v}[][][1]{Value}
 \centerline{ \includegraphics[width=5cm]{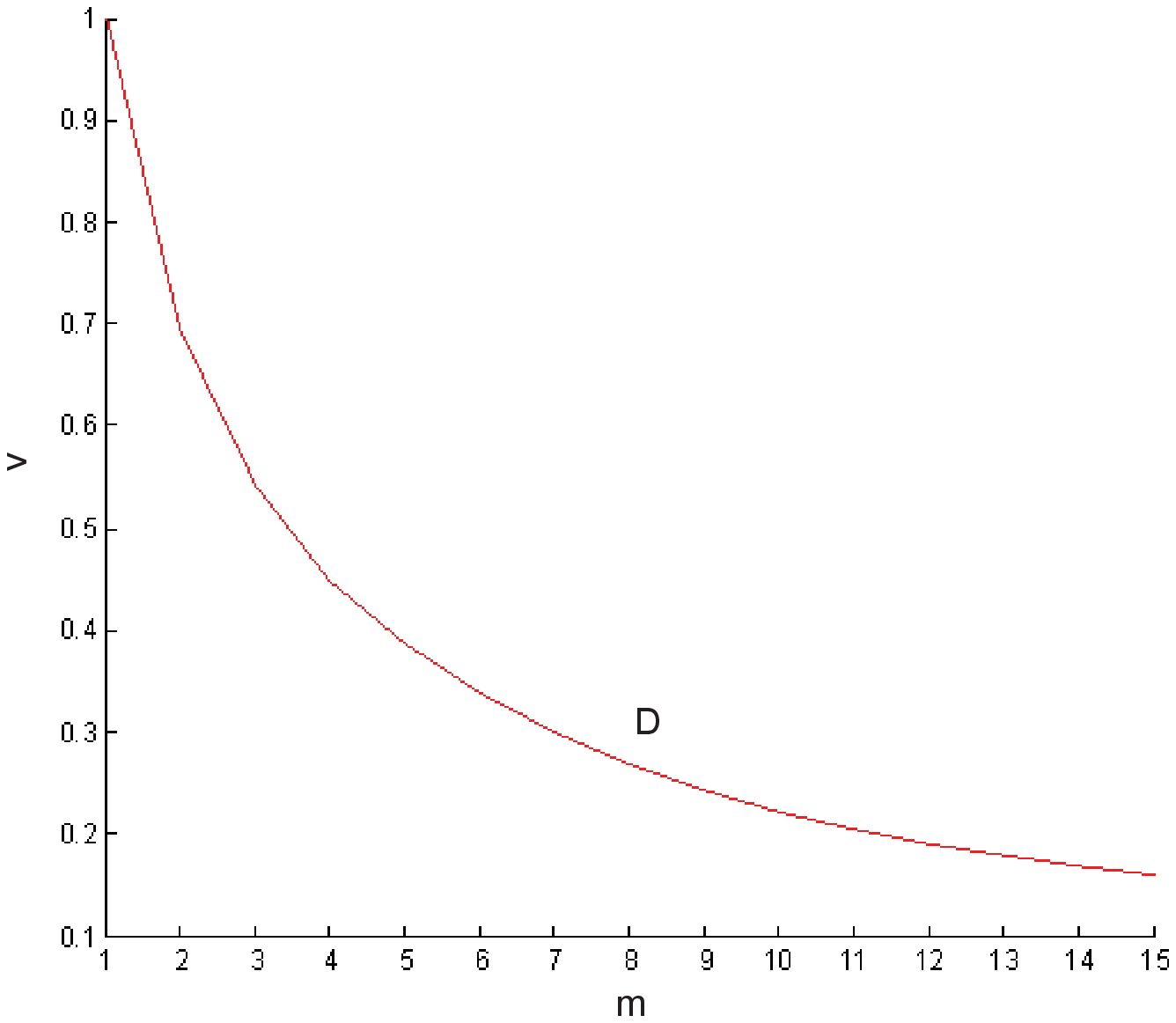}}
  \caption{Change of $12C_{12}-11C_{11}$ over the number of cells in the trapdoor channel with feedback with delay 1.}
 \label{TDcelldelay}
}\end{figure} The estimation use is the directed information rate, with $n=12$. In Fig.
\ref{TDcelldelay} we can see that the capacity decreases as the number of cells increases and
approaches zero. 
\end{subsubsection}
\end{subsection}
\begin{subsection}{The Ising channel}
The Ising model is a mathematical model of ferromagnetism in statistical mechanics. It was originally proposed by the physicist Wilhelm Lenz who gave it as a problem to his student Ernst Ising after whom it is named. The model consists of discrete variables called spins that can be in one of two states. The spins are arranged in a lattice or graph, and each spin interacts only with its nearest neighbors.

The Ising channel is based on its physical model, and simulates Intersymbol Interference where the state of the channel at time $i$ is the current input, and the output is determined by the input at time $i+1$.
\begin{figure}[h!]{
\psfrag{A}[][][1]{$x_n=0$} \psfrag{B}[][][1]{$x_n=1$}  \psfrag{C}[][][1]{$x_{n+1}$} \psfrag{D}[][][1]{$y_n$} \psfrag{h}[][][0.7]{$\frac{1}{2}$}
 \centerline{ \includegraphics[width=8cm]{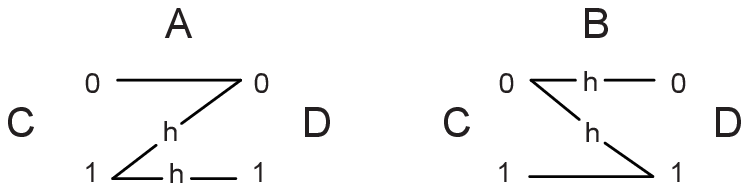}}
  \caption{ The Ising Channel.\cite{BerFla}}
 \label{Isingchannel}
}\end{figure}
The channel (without feedback) was introduced by Berger and Bonomi \cite{BerFla}
and is depicted in Fig. \ref{Isingchannel}. In their paper, they proved the existence of bounds
for the no-feedback case. In addition, they showed that the zero-error capacity without feedback
is $0.5$.

\begin{subsubsection}{Ising channel with delay $d=2$}
We estimate the capacity of the Ising channel with feedback.
\begin{figure}[h!]{
  \begin{center}
  \psfrag{A}[][][0.8]{$\overline{C}_n$}  \psfrag{B}[][][0.8]{$C_n$} \psfrag{C}[][][0.8]{$\underline{C}^{*}_n$} \psfrag{N}[][][1]{$n$}
  \psfrag{v}[][][0.8]{Value}
  \subfloat[]{\includegraphics[width=5cm]{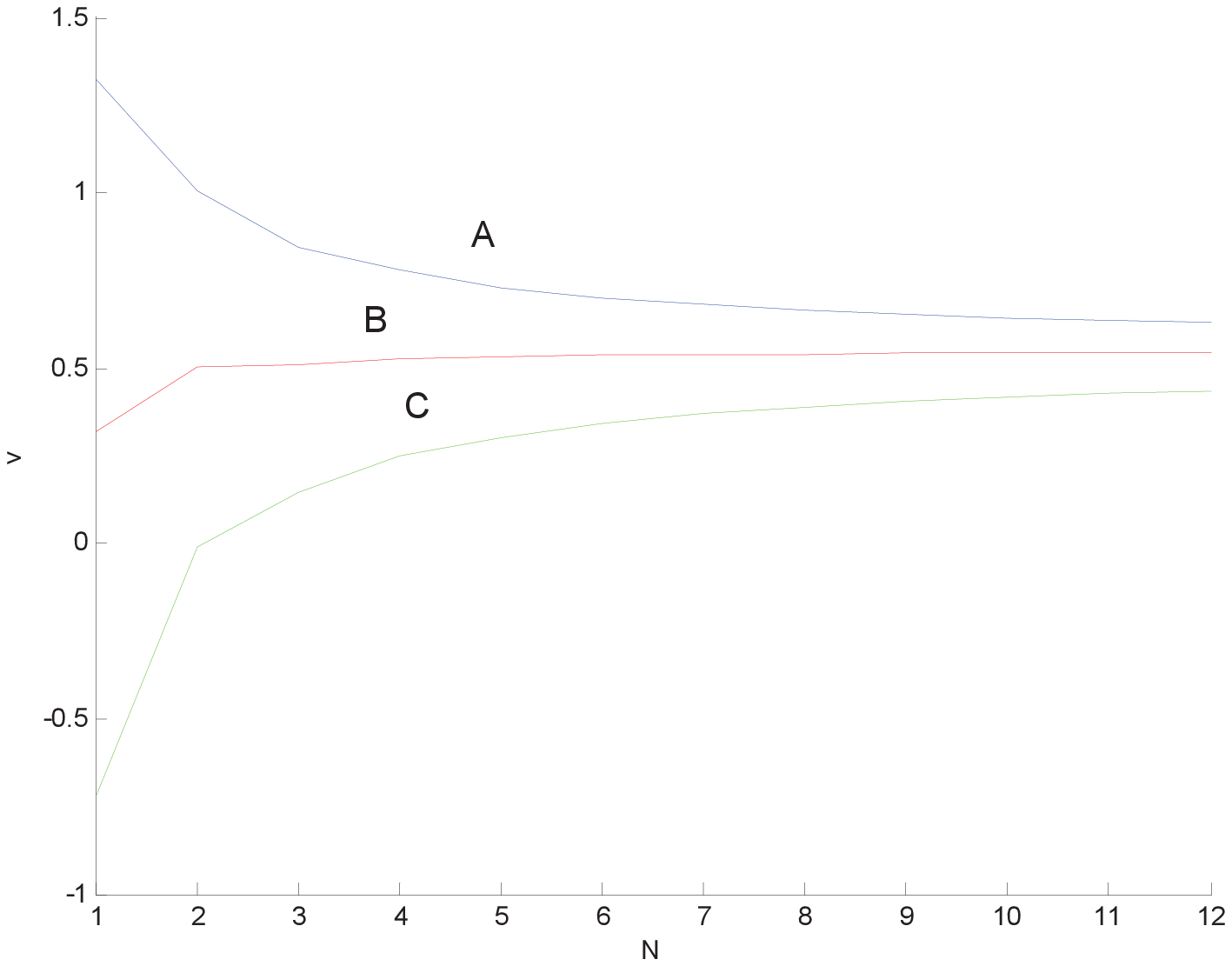}}\ \ \ \ \ \ \ \
  \psfrag{n}[][][1]{$n$}  \psfrag{v}[][][0.8]{Value}
  \subfloat[]{\includegraphics[width=5cm]{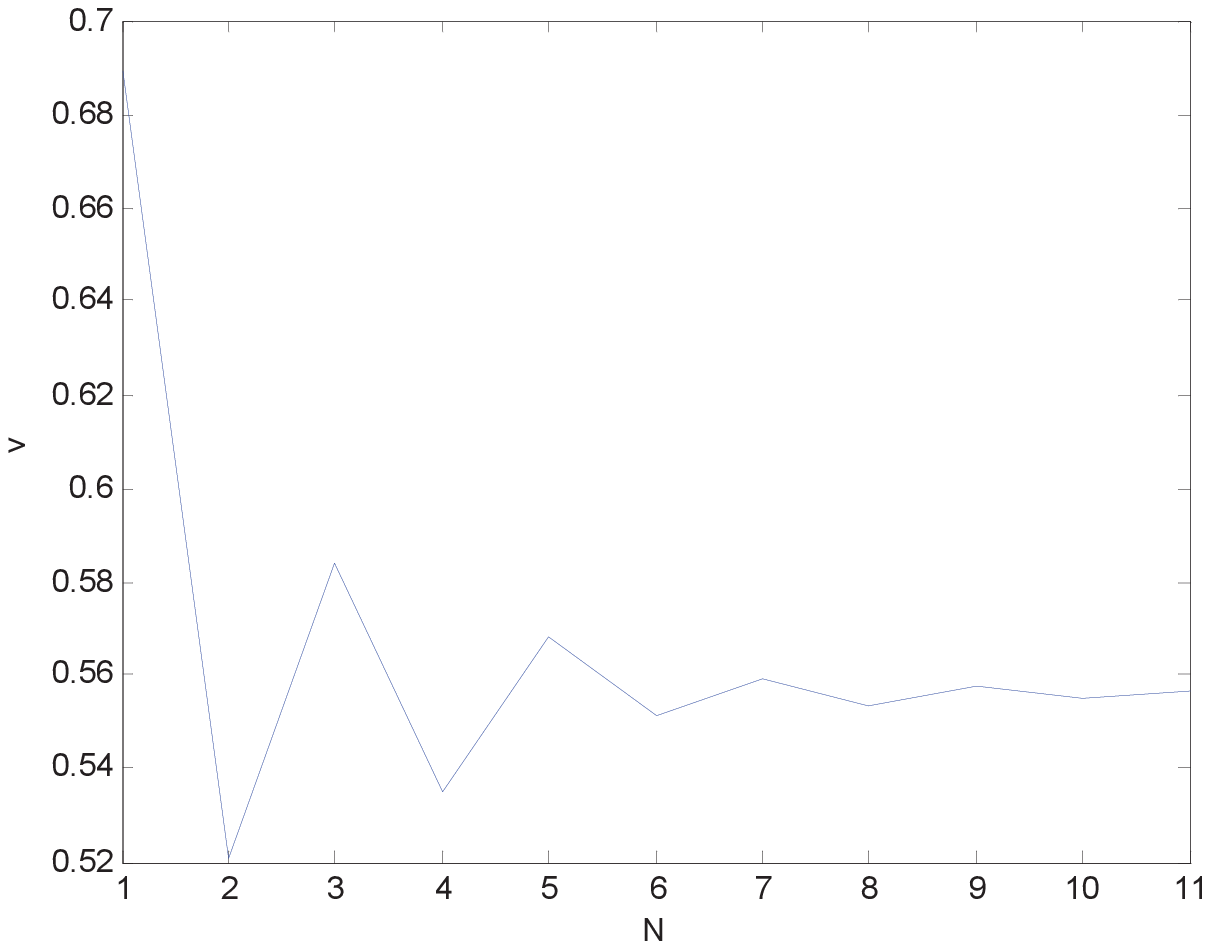}}
  \caption{Performance of Alg. \ref{algc} on the Ising channel with feedback delay of $d=2$. In (a) we present $\overline{C}_n,\ C_n,\ \underline{C}^{*}_n$, and in (b) we have $(n+1)C_{n+1}-nC_n$.}
    \label{IsingGraph2}
 \end{center}
}\end{figure} Since the output at time $i$ is determined by the input at times $i,\ i+1$, we
define the channel PMF as $p(y_0^{n-1}||x^n,s_0)$. Therefore, the feedback at time $i$ must be
the output at time $i-2$, since we cannot have $y_{i-1}$ before $x_{i-1}$ is sent. Thus, looking
at the Ising channel with delay $d=1$ is not a practical example, and we did not examine it. We
ran our algorithm on the Ising channel, with delayed feedback of $d=2$; the results are presented
in Fig. \ref{IsingGraph2}. In Fig. \ref{IsingGraph2} (a), we obtain $C_{12}=0.5459$, and in (b)
we achieve $12C_{12}-11C_{11}=0.5563$ in the $11^{th}$ difference.
\end{subsubsection}
\begin{subsubsection}{The effects the delay has on the capacity}
Here we investigate how the delay influences the capacity. We do so by computing the directed information rate estimator of the Ising channel with blocks of length $12$, over the feedback delay $d=\{2,3,...,12\}$. The formulas for estimating the capacity when the delay is bigger than 1 is given in Section \ref{secDescription}, equations (\ref{Rform1}), (\ref{Qform1}).
In Fig. \ref{Isingdelay} we can see that, as expected, the capacity decreases as the delay increases. This is due to the fact that we have less knowledge of the output to use.
\begin{figure}[h!]{
\psfrag{C}[][][0.8]{$C_{12}$} \psfrag{D}[][][0.8]{$12C_{12}-11C_{11}$} \psfrag{d}[][][1]{Delay of feedback}\psfrag{v}[][][1]{Value}
 \centerline{ \includegraphics[width=5cm]{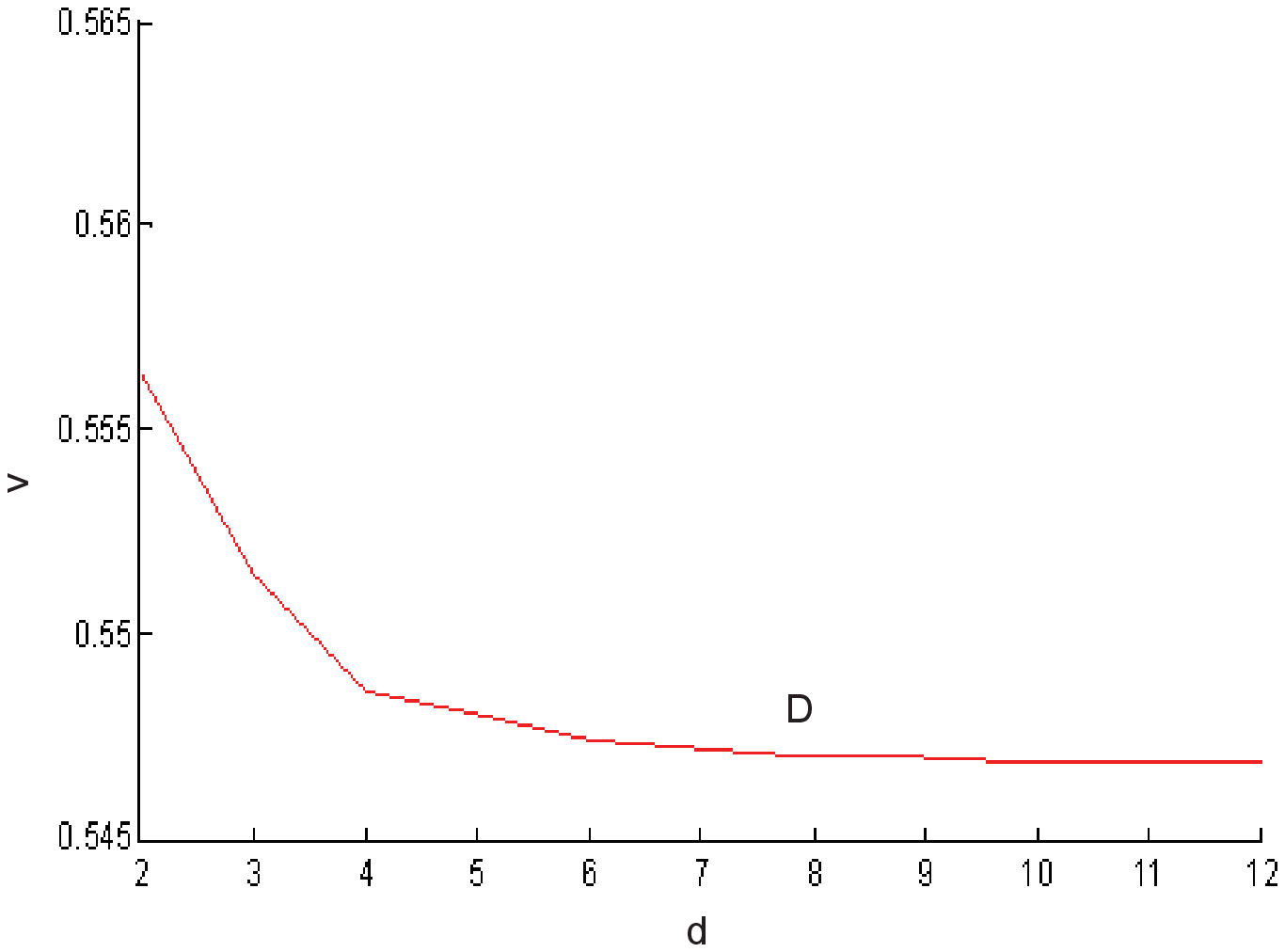}}
  \caption{Change of $12C_{12}-11C_{11}$ over the delay of the feedback on the Ising channel.}
 \label{Isingdelay}
}\end{figure}
\end{subsubsection}
\end{subsection}
\end{section}

\begin{section}{CONCLUSIONS}
In this paper, we generalized the classical BAA for maximizing the directed information over causal conditioning, i.e., calculating
\begin{align}
C_n=\frac{1}{n}\max_{p(x^n||y^{n-1})}I(X^n\rightarrow Y^n).\nonumber
\end{align}
The optimizing the directed information  is necessary for estimating the capacity of an FSC with
feedback. As we attempted to solve this problem we found that difficulties arose regarding the
causal conditioning probability we tried to optimize over. We overcame this barrier by using an
additional backwards loop to find all components of the causal conditioned probability,
separately.

Another application of optimizing the directed information is to estimate the rate distortion function for source coding with feed forward as presented in \cite{WeisMer}, \cite{VenPra}, \cite{VenPra2}. In our future work \cite{NaissPermuter2}, we address the source coding with feedforward problem, and derive bounds for stationary and ergodic sources. We also present and prove a BA-type algorithm for obtaining a numerical solution that computes these bounds.
\end{section}
\appendices
\section{General case for channel coding-Feedback that is a function of the delayed output}\label{appfd}
Here we extend Alg. \ref{algc}, given in Section \ref{secDerivate}, for channels where the encoder has specific information about the delayed output. In this case, the input probability is given by $r(x^n||z^{n-d})$, where $z_i=f(y_i)$ is the feedback, and $f$ is deterministic. In other words, we solve the optimization problem given by
\begin{align}
\max_{r(x^n||z^{n-d})}I(X^n\rightarrow Y^n).\nonumber
\end{align}
The optimization problem is associated to Fig. \ref{dch}.
\begin{figure}[h]{
  \psfrag{x}[][][0.8]{$X^n(M,Z^{i-d})$}\psfrag{C}[][][1]{$p(y^n||x^n)$}\psfrag{y}[][][0.8]{$Y^n$}\psfrag{M}[][][1]{$M$}\psfrag{D}[][][0.8]{Decoder}
  \psfrag{W}[][][1]{$\ \ \ \hat{M}(Y^n)$}  \psfrag{d}[][][1]{$z^{i-d}=f(y^{i-d})$}\psfrag{e}[][][0.8]{Encoder}
 \centerline{ \includegraphics[width=14cm]{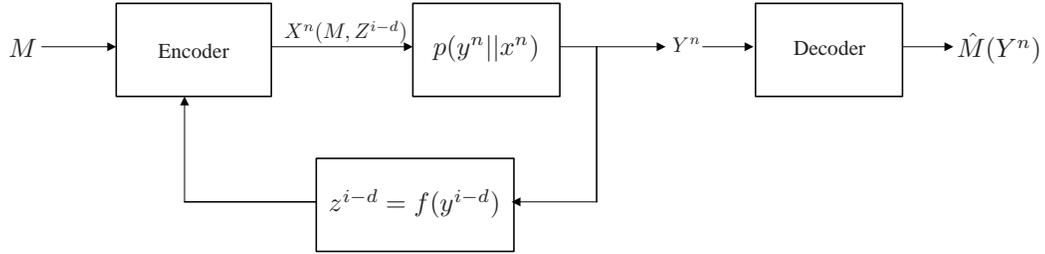}}
 \caption{Channel with delayed feedback as a function of the output.}
 \label{dch}
}\end{figure}

The proof for this case is similar to that of Theorem \ref{Thc}, except the steps that follow from Lemmas \ref{rfix} and \ref{qfix}. Lemma \ref{rfix} proves the existence of an argument $q(x^n|y^n)$ that maximizes the directed information where $r(x^n||y^{n-1})$ is fixed. The modification of this lemma is presented here, where we find the argument $q(x^n|y^n)$ that maximizes the directed information where $r(x^n||z^{n-d})$ is fixed; the proof is omitted. Therefore, the maximization over $q(x^n|y^n)$ where $r(x^n||z^{n-d})$ is fixed is given by
\begin{align}
q^*(x^n|y^n)=\frac{r(x^n||z^{n-d})p(y^n||x^n)}{\sum_{x^n}{r(x^n||z^{n-d})p(y^n||x^n)}}.\nonumber
\end{align}

Lemma \ref{qfix} proves the existence of an argument $r(x^n||y^{n-1})$ that maximizes the directed information where $q(x^n|y^n)$ is fixed. We replace this lemma by Lemma \ref{qfixd}.
\begin{lemma}\label{qfixd}.
For fixed $q(x^n|y^n)$, there exists $c_1(q)$ that achieves $\max_{r(x^n||z^{n-d})}I(X^n\rightarrow Y^n),$ and given by
\begin{align}
r(x^n||z^{n-d})=\prod_{i=1}^n{r(x_i|x^{i-1},z^{i-d})},\nonumber
\end{align}
where
\begin{equation}
r(x_i|x^{i-1},z^{i-d})=\frac{r'(x^i,z^{i-d})}{\sum_{x^i}r'(x^i,z^{i-d})},
\end{equation}
and
\begin{align}
r'(x^i,z^{i-d})=\prod_{x_{i+1}^n,y_{i-d+1}^n}\prod_{A_{i,d,z}}{\left[\frac{q(x^n|y^n)}
{\prod_{j=i+1}^{n}{r(x_j|x^{j-1},z^{j-d})}}\right]^{\frac{p(y^n||x^n)\prod_{j=i+1}^{n}{r(x_j|x^{j-1},z^{j-d})}}{\sum_{A_{i,d,z}}\prod_{j=1}^{i-d}{p(y_j|x^{j},y^{j-1})}}}}.
\end{align}
\begin{proof}
We find the products of $r(x^n||z^{n-d})$ that achieve maximum for the directed information.
For convenience, let us use for short: $r_i\triangleq r(x_i|x^{i-1},z^{i-d})$, and $p_i\triangleq p(y_i|x^i,y^{i-1})$. As in Lemma \ref{isconcave} we can omit that $I(X^n\rightarrow Y^n)$ is concave in $\{r(x^n||y^{n-d}),\ q(x^n|y^n)\}$. Furthermore, the constraints of the optimization problem are affine, and we can use the Lagrange multipliers method with the Karush-Kuhn-Tucker conditions. We define the Lagrangian as:
\begin{align}
J=\sum_{x^n,y^n}{\left(p(y^n||x^n)\prod_{i=1}^{n}{r_i} \log\left(\frac{q(x^n|y^n)}{\prod_{j=1}^{n}{r_j}}\right)\right)}+ \sum_{i=1}^{n}{\left(\sum_{x^{i-1},z^{i-d}}\nu_{i,(x^{i-1},z^{i-d})}\left(\sum_{x_i}{r_i}-1\right)\right)}.\nonumber
\end{align}
Now, for every $i\in\{1..n\}$ we find $r_i$ s.t.,
\begin{align}
\frac{\partial J}{\partial r_i}&=\sum_{x_{i+1}^n,y_{i-d+1}^n,A_{i,d,z}}{\left(p(y^n||x^n)\prod_{j\neq i=1}^{n}{r_j} \left[\log{\frac{q(x^n|y^n)}{\prod_{j=1}^{n}{r_j}}}-1\right]\right)}+\nu_{i,(x^{i-1},z^{i-d})}\nonumber\\
&=\sum_{A_{i,d,z}}\prod_{j=1}^{i-1}{r_{j}}\sum_{x_{i+1}^n,y_{i-d+1}^n}{\left(p(y^n||x^n)\prod_{j=i+1}^{n}{r_{j}} \left[\log{\frac{q(x^n|y^n)}{\prod_{j=i+1}^{n}{r_j}}}-\log{\prod_{j=1}^{i-1}{r_{j}}}-\log{r_i}-1\right]\right)}+\nu_{i,(x^{i-1},z^{i-d})}\nonumber\\
&=0,\nonumber
\end{align}
where the set $A_{i,d,z}=\{y^{i-d}:z^{i-d}=f(y^{i-d})\}$ stands for all output sequences $y^{i-d}$ s.t. the function in the delay maps them to the same sequence $z^{i-d}$, which is the feedback.

Note that since $\prod_{j=1}^{i-1}{r_j}$ does not depend on $A_{i,d,z}$, we can take the product out of the sum. Furthermore,
since $\nu_i$ is a function of $(x^{i-1},z^{i-d})$ we can divide the whole equation by the product above, and get a new $\nu^{*}_{i,(x^{i-1},z^{i-d})}$.
Moreover, we can see that three of the expressions in the sum, i.e., $\{\log{\prod_{j=1}^{i-1}{r_{j}}},\ \log{r_i},\  1\}$, do not depend on $(x_{i+1}^n,y_{i-d+1}^n)$, thus leaving their coefficient in the equation to be
\begin{align}
\sum_{x_{i+1}^n,y_{i-d+1}^n,A_{i,d,z}}{p(y^n||x^n)\prod_{j=i+1}^{n}{r_{j}}}=\sum_{A_{i,d,z}}\prod_{j=1}^{i-d}{p_j}.\nonumber
\end{align}
Hence we obtain:
\begin{align}
\log\left[\prod_{x_{i+1}^n,y_{i-d+1}^n}{\left(\frac{q(x^n|y^n)}{\prod_{j=i+1}^{n}{r_j}}\right)^{\frac{p(y^n||x^n)\prod_{j=i+1}^{n}{r_{j}}}{\sum_{A_{i,d,z}}{\prod_{j=1}^{i-d}{p_j}}}}}\right]
-\log{r_i}-\log\nu^{**}_{i,(x^{i-1},z^{i-d})}=0,\nonumber
\end{align}
where $$\log{\nu^{**}_{i,(x^{i-1},z^{i-d})}}=\sum_{A_{i,d,z}}{\prod_{j=1}^{i-d}{p_j}}\left(1+\log{\prod_{j=1}^{i-1}{r_{j}}}\right)-\nu^{*}_{i,(x^{i-1},z^{i-d})}.$$ Therefore, we are left with the expression:
\begin{equation}
r(x_i|x^{i-1},z^{i-d})=\frac{r'(x^i,z^{i-d})}{\sum_{x^i}r'(x^i,z^{i-d})},\nonumber
\end{equation}
where
\begin{align}
r'(x^i,z^{i-d})=\prod_{x_{i+1}^n,y_{i-d+1}^n,A_{i,d,z}}{\left[\frac{q(x^n|y^n)}
{\prod_{j=i+1}^{n}{r(x_j|x^{j-1},z^{j-d})}}\right]^{\frac{p(y^n||x^n)\prod_{j=i+1}^{n}{r(x_j|x^{j-1},z^{j-d})}}{\sum_{A_{i,d,z}}\prod_{j=1}^{i-d}{p(y_j|x^{j},y^{j-1})}}}}.
\end{align}
As in Section \ref{secDerivate}, we can see that for all $i$, $r_i$ is dependent on $q(x^n|y^n)$ and $\{r_{i+1},r_{i+2},...,r_{n}\}$, and $r_n$ is a function of $q(x^n|y^n)$ alone. Thus, we use the \textit{Backwards maximization} method.
After calculating $r_i$ for all $i={1,...,n}$, we obtain $r(x^n||z^{n-d})=\prod_{i=1}^{n}{r_i}$ that maximizes the directed information where $q(x^n|y^n)$ is fixed, i.e., $c_1(q)$ and the lemma is proven.
\end{proof}
\end{lemma}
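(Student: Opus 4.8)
The plan is to mirror the proof of Lemma~\ref{qfix}, adapting each step to the fact that the optimization variable is now the causally conditioned PMF $r(x^n||z^{n-d})$ rather than $r(x^n||y^{n-1})$. First I would note that the concavity argument of Lemma~\ref{isconcave} carries over essentially unchanged: the directed information is still a sum of $\log$-type terms in $\{r(x^n||z^{n-d}),q(x^n|y^n)\}$, hence concave with continuous partial derivatives, while the normalization constraints $\sum_{x_i}r(x_i|x^{i-1},z^{i-d})=1$ together with non-negativity remain affine. This justifies solving the constrained maximization with the Lagrange multiplier / KKT machinery of \cite[Ch.~5.3.3]{Boyd}.

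Next I would form the Lagrangian, attaching one multiplier $\nu_{i,(x^{i-1},z^{i-d})}$ to each normalization constraint, and differentiate the objective with respect to a single factor $r_i\triangleq r(x_i|x^{i-1},z^{i-d})$. Using $\partial(\prod_j r_j)/\partial r_i=\prod_{j\neq i}r_j$, the derivative splits into a $\log$-term and a constant $-1$ term, exactly as in Lemma~\ref{qfix}. The one genuinely new ingredient is that the same factor $r_i$ is \emph{shared} across every output prefix $y^{i-d}$ that the feedback map sends to the common value $z^{i-d}$; consequently the derivative collects an extra sum over the set $A_{i,d,z}=\{y^{i-d}:z^{i-d}=f(y^{i-d})\}$, whereas in Lemma~\ref{qfix} (the special case $z=y$, $d=1$) this set is a singleton and silently disappears.

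The key computational step is then to isolate $r_i$. I would pull $\prod_{j=1}^{i-1}r_j$ out of the $A_{i,d,z}$ sum, since it does not depend on the summation variables, and absorb it into a renamed multiplier; I would also observe that the three terms $\log\prod_{j<i}r_j$, $\log r_i$ and the constant $1$ are independent of $(x_{i+1}^n,y_{i-d+1}^n)$. Their common coefficient is the marginal $\sum_{x_{i+1}^n,y_{i-d+1}^n,A_{i,d,z}}p(y^n||x^n)\prod_{j=i+1}^n r_j$, which I expect to collapse, via the marginalization property~(\ref{casprop1}) applied to both $r$ and $p(y^n||x^n)$, to $\sum_{A_{i,d,z}}\prod_{j=1}^{i-d}p_j$ with $p_j\triangleq p(y_j|x^j,y^{j-1})$. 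Establishing this identity is the main obstacle, as it is the sole place where the delay $d$ and the many-to-one feedback map $f$ genuinely interact: summing $\prod_{j>i}r_j$ over $x_{i+1}^n$ gives $1$, summing $p(y^n||x^n)$ over the future outputs $y_{i-d+1}^n$ leaves $\prod_{j=1}^{i-d}p_j$, and only the $A_{i,d,z}$ sum survives. Everything afterward is routine exponentiation and normalization, producing the stated formula for $r(x_i|x^{i-1},z^{i-d})$.

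Finally, I would close as in Lemma~\ref{qfix} through the \emph{backwards maximization} argument: $r_n$ depends only on the fixed $q(x^n|y^n)$, so substituting it makes $r_{n-1}$ a function of $q$ alone, and inducting downward from $i=n$ to $i=1$ expresses every factor in terms of $q$. Multiplying the factors then yields the maximizer $c_1(q)=r(x^n||z^{n-d})=\prod_{i=1}^n r_i$ sought.
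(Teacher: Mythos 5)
Your proposal is correct and follows essentially the same route as the paper's proof: the Lagrangian with one multiplier per normalization constraint, differentiation with respect to each factor $r_i$, pulling $\prod_{j<i}r_j$ out of the sum over $A_{i,d,z}$ and absorbing it into the multiplier, the coefficient identity $\sum_{x_{i+1}^n,y_{i-d+1}^n,A_{i,d,z}}p(y^n||x^n)\prod_{j>i}r_j=\sum_{A_{i,d,z}}\prod_{j=1}^{i-d}p_j$, exponentiation and normalization, and the backwards maximization from $i=n$ down to $i=1$. Your added explanation of why the extra sum over $A_{i,d,z}$ arises (the factor $r_i$ being shared across all output prefixes mapped by $f$ to the same $z^{i-d}$) is a correct and slightly more explicit account of what the paper states tersely.
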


As mentioned, by replacing Lemmas \ref{rfix}, \ref{qfix} by those given here, we can follow the outline of Theorem \ref{Thc} and conclude the existence of an alternating maximization procedure, i.e., we can compute
\begin{align}
C_n=\frac{1}{n}\max_{r(x^n||z^{n-d})}I(X^n\rightarrow Y^n)\nonumber
\end{align}
that is equal to $\lim_{k\rightarrow\infty}I_L(k)$, where $I_L(k)$ is the value of $I_L$ in the
$k$th iteration in the extended algorithm. One more step is required in order to prove the
extension of Alg. \ref{algc} to the case presented here; the existence of $I_U$. This part is
presented in Appendix \ref{proofThUp}.

\section{Proof of Theorem \ref{Thupbound}}\label{proofThUp}
Here, we prove the existence of an upper bound, $I_U$, that converges to $C_n$ from above
simultaneously with the convergence on $I_L$ to it from below, as in Alg. \ref{algc}. To this
purpose, we present and prove few  lemmas that assist in obtaining our main goal. We start with
Lemma \ref{Upbound1} that gives an inequality for the directed information. This inequality is
used in Lemma \ref{Upbound2} to prove the existence of our upper bound which Lemma
\ref{Upbound2tight} proves to be tight. Theorem \ref{Thupbound} combines Lemmas \ref{Upbound2},
\ref{Upbound2tight}.
\begin{lemma}\label{Upbound1}.
Let $I_{r_1}(X^n\rightarrow Y^n)$ correspond to $r_1(x^n||y^{n-d})$, then for every $r_0(x^n||y^{n-d})$,
\begin{align}
I_{r_1}(X^n\rightarrow Y^n)\leq\sum_{x^n,y^{n-d}}r_1(x^n||y^{n-d})\sum_{y_{n-d+1}^n}p(y^n||x^n)\log\frac{p(y^n||x^n)}{\sum_{x'^n}p(y^n||x'^n)\cdot r_0(x'^n||y^{n-d})}.\nonumber
\end{align}
\begin{proof}
For any $r_1(x^n||y^{n-d})$, $r_0(x^n||y^{n-d})$,
\begin{align}
\sum_{x^n,y^{n-d}}r_1(x^n||y^{n-d})&\sum_{y_{n-d+1}^n}p(y^n||x^n)\log\frac{p(y^n||x^n)}{\sum_{x'^n}p(y^n||x'^n)\cdot r_0(x'^n||y^{n-d})}-I_{r_1}(X^n\rightarrow Y^n)\nonumber\\
&=\sum_{x^n,y^n}r_1(x^n||y^{n-d})\cdot p(y^n||x^n)\log{\frac{p(y^n||x^n)}{\sum_{{x'}^n}p(y^n||x'^n)\cdot r_0(x'^n||y^{n-d})}}\nonumber\\
&\ \ \ -\sum_{x^n,y^n}r_1(x^n||y^{n-d})\cdot p(y^n||x^n)\log{\frac{p(y^n||x^n)}{\sum_{{x'}^n}p(y^n||x'^n)\cdot r_1(x'^n||y^{n-d})}}\nonumber\\
&=\sum_{x^n,y^n}r_1(x^n||y^{n-d})\cdot p(y^n||x^n)\log{\frac{\sum_{{x'}^n}p(y^n||x'^n)\cdot r_1(x'^n||y^{n-d})}{\sum_{{x'}^n}p(y^n||x'^n)\cdot r_0(x'^n||y^{n-d})}}\nonumber\\
&=\sum_{y^n}p_1(y^n)\log{\frac{p_1(y^n)}{p_0(y^n)}}\nonumber\\
&\stackrel{(a)}{=}D\left(p_1(y^n)||p_0(y^n)\right)\nonumber\\
&\stackrel{(b)}{\geq}0,\nonumber
\end{align}
where in (a), $p_0(y^n)$ and $p_1(y^n)$ are the PMFs of $y^n$ that corresponds to $r_0(x'^n||y^{n-d})$ and $r_1(x'^n||y^{n-d})$, and (b) follows from the non negativity of the divergence. Thus, the lemma is proven.
\end{proof}
\end{lemma}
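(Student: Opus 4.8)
The plan is to show that the difference between the right-hand side and $I_{r_1}(X^n\rightarrow Y^n)$ is a relative entropy between two induced output distributions, and is therefore non-negative. First I would name the two output PMFs that $r_1$ and $r_0$ produce through the fixed channel,
$$p_1(y^n)=\sum_{x'^n}p(y^n||x'^n)r_1(x'^n||y^{n-d}),\qquad p_0(y^n)=\sum_{x'^n}p(y^n||x'^n)r_0(x'^n||y^{n-d}),$$
and use the causal-conditioning chain rule $r_1(x^n||y^{n-d})p(y^n||x^n)=p(x^n,y^n)$ to write the directed information under $r_1$ with its own induced marginal in the denominator, $I_{r_1}(X^n\rightarrow Y^n)=\sum_{x^n,y^n}r_1(x^n||y^{n-d})p(y^n||x^n)\log\frac{p(y^n||x^n)}{p_1(y^n)}$.

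Next I would subtract $I_{r_1}$ from the right-hand side. Reinstating the suppressed sum over $y_{n-d+1}^n$ so that both expressions run over the full pair $(x^n,y^n)$, the two occurrences of $\log p(y^n||x^n)$ cancel and the logarithms combine into a single $\log\frac{p_1(y^n)}{p_0(y^n)}$, giving
$$\text{RHS}-I_{r_1}=\sum_{x^n,y^n}r_1(x^n||y^{n-d})p(y^n||x^n)\log\frac{p_1(y^n)}{p_0(y^n)}.$$
The key step is then to carry out the inner summation over $x^n$ alone: the logarithm no longer depends on $x^n$, so $\sum_{x^n}r_1(x^n||y^{n-d})p(y^n||x^n)=p_1(y^n)$ factors out and collapses the double sum to $\sum_{y^n}p_1(y^n)\log\frac{p_1(y^n)}{p_0(y^n)}=D(p_1\|p_0)$.

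Finally I would invoke the non-negativity of relative entropy, $D(p_1\|p_0)\ge 0$ (the same log-sum inequality already used in Lemma \ref{isconcave}), to conclude that the right-hand side is at least $I_{r_1}$, which is the claim. I expect the only delicate point to be bookkeeping rather than substance: one must check that for delay $d$ the chain rule really yields $r_1(x^n||y^{n-d})p(y^n||x^n)=p(x^n,y^n)$, so that $p_1$ is a genuine output marginal and the $x^n$-summation telescopes, and that merging $\sum_{y^{n-d}}$ with $\sum_{y_{n-d+1}^n}$ back into $\sum_{y^n}$ is legitimate because the stated right-hand side already carries the full $p(y^n||x^n)$. Once the two output marginals $p_1,p_0$ are identified, the relative-entropy form is forced and the bound is immediate.
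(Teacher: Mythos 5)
Your proposal is correct and follows essentially the same route as the paper: write $I_{r_1}$ with its own induced output marginal $p_1(y^n)$ in the denominator, subtract it from the right-hand side so the $\log p(y^n||x^n)$ terms cancel, collapse the $x^n$-sum using $\sum_{x^n}r_1(x^n||y^{n-d})p(y^n||x^n)=p_1(y^n)$, and conclude via $D(p_1\|p_0)\geq 0$. Your extra remark about verifying the causal chain rule for general delay $d$ is a sensible bookkeeping check that the paper leaves implicit, but it does not change the argument.
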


Our next lemma uses the inequality in Lemma \ref{Upbound1} to show the existence of the upper bound, which is the first step in proving Theorem \ref{Thupbound}.
\begin{lemma}\label{Upbound2}.
For every $r_0(x^n||y^{n-d})$,
\begin{align}
C_n\leq I_U,\nonumber
\end{align}
where
\begin{align}
I_U=\frac{1}{n}\max_{x^d}\sum_{y_1}\max_{x_{d+1}}\sum_{y_2}\cdots\max_{x_n}\sum_{y_{n-d+1}^n}p(y^n||x^n)\log\frac{p(y^n||x^n)}{\sum_{x'^n}p(y^n||x'^n)\cdot r_0(x'^n||y^{n-d})}.\nonumber
\end{align}
\begin{proof}
To prove this lemma, we first use lemma \ref{Upbound1}. For every $r_1(x^n||y^{n-d}),\ r_0(x^n||y^{n-d})$,
\begin{align}
I_{r_1}(X^n\rightarrow Y^n)&\stackrel{(a)}{\leq}\sum_{x^n,y^{n-d}}r_1(x^n||y^{n-d})\sum_{y_{n-d+1}^n}p(y^n||x^n)\log\frac{p(y^n||x^n)}{\sum_{x'^n}p(y^n||x'^n)\cdot r_0(x'^n||y^{n-d})}\nonumber\\
&\stackrel{(b)}{\leq}\sum_{x^n,y^{n-d}}\prod_{i=1}^n r_1(x_i|x^{i-1},y^{i-d})\underbrace{\max_{x_n}\sum_{y_{n-d+1}^n}p(y^n||x^n)\log\frac{p(y^n||x^n)}{\sum_{x'^n}p(y^n||x'^n)\cdot r_0(x'^n||y^{n-d})}}_\textrm{$f(x^{n-1},y^{n-d})$}\nonumber\\
&\stackrel{(c)}{=}\sum_{x^{n-1},y^{n-d-1}}\prod_{i=1}^{n-1} r_1(x_i|x^{i-1},y^{i-d})\underbrace{\sum_{y_{n-d}}\max_{x_n}\sum_{y_{n-d+1}^n}p(y^n||x^n)\log\frac{p(y^n||x^n)}{\sum_{x'^n}p(y^n||x'^n)\cdot r_0(x'^n||y^{n-d})}}_\textrm{$f(x^{n-1},y^{n-d-1})$}\nonumber\\
&\leq\sum_{x^{n-1},y^{n-d-1}}\prod_{i=1}^{n-1} r_1(x_i|x^{i-1},y^{i-d})\underbrace{\max_{x_{n-1}}\sum_{y_{n-d}}\max_{x_n}\sum_{y_{n-d+1}^n}p(y^n||x^n)\log\frac{p(y^n||x^n)}{\sum_{x'^n}p(y^n||x'^n)\cdot r_0(x'^n||y^{n-d})}}_\textrm{$f(x^{n-2},y^{n-d-1})$}\nonumber\\
&\vdots\nonumber\\
&\leq\sum_{x^{d}}\prod_{i=1}^d r_1(x_i|x^{i-1},y^{i-d})\underbrace{\sum_{y_1}\max_{x_{d+1}}\sum_{y_2}\cdots\max_{x_n}\sum_{y_{n-d+1}^n}p(y^n||x^n)\log\frac{p(y^n||x^n)}{\sum_{x'^n}p(y^n||x'^n)\cdot r_0(x'^n||y^{n-d})}}_\textrm{$f(x^d)$}\nonumber\\
&\leq\underbrace{\sum_{x^{d}}\prod_{i=1}^d r_1(x_i|x^{i-1},y^{i-d})}_\textrm{$=1$}\underbrace{\max_{x^d}\sum_{y_1}\max_{x_{d+1}}\sum_{y_2}\cdots\max_{x_n}\sum_{y_{n-d+1}^n}p(y^n||x^n)\log\frac{p(y^n||x^n)}{\sum_{x'^n}p(y^n||x'^n)\cdot r_0(x'^n||y^{n-d})}}_\textrm{$\in\mathbb{R}$}\nonumber\\
&=\max_{x^d}\sum_{y_1}\max_{x_{d+1}}\sum_{y_2}\cdots\max_{x_n}\sum_{y_{n-d+1}^n}p(y^n||x^n)\log\frac{p(y^n||x^n)}{\sum_{x'^n}p(y^n||x'^n)\cdot r_0(x'^n||y^{n-d})},\nonumber
\end{align}
where (a) follows Lemma \ref{Upbound1}, (b) follows from maximizing an expression over $x_n$, and (c) follows from the fact that the expression in the under-brace is a function of $x^{n-1},y^{n-d}$, and we can take it out of the summation over $x_n$ and use $\sum_{x_n}r(x_n|x^{n-1},y^{n-d})=1$. The rest of the steps are the same as (b) and (c), where we refer to a different $x_i$.

Since the inequality above is true for every $r_1(x^n||y^{n-d})$, we can use it on $r_c(x^n||y^{n-d})$ that achieves $C_n$, and thus for every $r_0(x^n||y^{n-d})$
\begin{align}
C_n\leq\frac{1}{n}\max_{x^d}\sum_{y_1}\max_{x_{d+1}}\sum_{y_2}\cdots\max_{x_n}\sum_{y_{n-d+1}^n}p(y^n||x^n)\log\frac{p(y^n||x^n)}{\sum_{x'^n}p(y^n||x'^n)\cdot r_0(x'^n||y^{n-d})}.\nonumber
\end{align}
This is also true for every $r_0(x^n||y^{n-d})$, and hence for the minimum over all $r_0(x^n||y^{n-d})$, and we obtain
\begin{align}
C_n\leq\frac{1}{n}\min_{r_0}\max_{x^d}\sum_{y_1}\max_{x_{d+1}}\sum_{y_2}\cdots\max_{x_n}\sum_{y_{n-d+1}^n}p(y^n||x^n)\log\frac{p(y^n||x^n)}{\sum_{x'^n}p(y^n||x'^n)\cdot r_0(x'^n||y^{n-d})},\nonumber
\end{align}
and the lemma is proven.
\end{proof}
\end{lemma}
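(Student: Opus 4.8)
The plan is to start from the inequality established in Lemma~\ref{Upbound1}. For an arbitrary but fixed $r_0(x^n||y^{n-d})$ it bounds $I_{r_1}(X^n\rightarrow Y^n)$ above by the $r_1$-weighted average
$$\sum_{x^n,y^{n-d}} r_1(x^n||y^{n-d})\,g(x^n,y^{n-d}),\qquad g(x^n,y^{n-d})\triangleq\sum_{y_{n-d+1}^n}p(y^n||x^n)\log\frac{p(y^n||x^n)}{\sum_{x'^n}p(y^n||x'^n)r_0(x'^n||y^{n-d})}.$$
Since this upper bound is \emph{linear} in the weighting $r_1$ and holds for \emph{every} $r_1$, the strategy is to dominate the average by a quantity that no longer depends on $r_1$ at all. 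Specializing the resulting $r_1$-free inequality to the maximizer $r_c$ that achieves $C_n$ then gives $nC_n=I_{r_c}(X^n\rightarrow Y^n)\le nI_U$, i.e.\ $C_n\le I_U$ for the chosen $r_0$, which is exactly the claim.

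First I would factor the weight by causal conditioning, $r_1(x^n||y^{n-d})=\prod_{i=1}^n r_1(x_i|x^{i-1},y^{i-d})$, and then \emph{peel off} the summation one index at a time, moving backward from $i=n$ down to $i=1$. The governing observation is elementary: for any fixed conditioning, $\sum_{x_i} r_1(x_i|x^{i-1},y^{i-d})\,h(x_i,\cdot)\le\max_{x_i} h(x_i,\cdot)$, because $r_1(\cdot|x^{i-1},y^{i-d})$ is a PMF summing to one. Thus summing over $x_n$ against its weight is replaced by $\max_{x_n}$, turning $g$ into a function $f(x^{n-1},y^{n-d})$; because no remaining factor $r_1(x_i|x^{i-1},y^{i-d})$ with $i\le n-1$ depends on the coordinate $y_{n-d}$ (each such factor sees only $y^{i-d}$ with $i-d\le n-d-1$), the free sum $\sum_{y_{n-d}}$ may be carried out next, producing a function of $(x^{n-1},y^{n-d-1})$. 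Iterating—alternately $\max$ over the next input and $\sum$ over the next output—reproduces exactly the nested pattern $\cdots\max_{x_n}\sum_{y_{n-d+1}^n}$ of $I_U$, and at the final stage the leading weight collapses via $\sum_{x^d}\prod_{i=1}^d r_1(x_i|x^{i-1})=1$ (here $y^{i-d}=\emptyset$ for $i\le d$, so there is no feedback to condition on), leaving the $r_1$-free expression $\max_{x^d}\sum_{y_1}\cdots\max_{x_n}\sum_{y_{n-d+1}^n}(\cdots)$.

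The step I expect to be the main obstacle is the bookkeeping of variable dependence across the delay. One must verify at each stage that, after the current $\max_{x_i}$ has absorbed the weight $r_1(x_i|\cdot)$, the resulting function depends on the output coordinate to be summed next (so the sum is meaningful) while \emph{no} still-unprocessed weight depends on it (so that sum may legitimately be pushed through all the remaining weights before the next maximization). Keeping the index arithmetic of $y^{i-d}$ straight—so that the interleaving of input-maxima and output-sums lines up precisely with the $x^d,x_{d+1},\dots,x_n$ and $y_1,\dots,y_{n-d},y_{n-d+1}^n$ ordering in the statement—is the only delicate point; everything else is the repeated application of the ``weighted average $\le$ maximum'' inequality together with the normalization of each conditional PMF. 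Once the $r_1$-free bound holds for every $r_1$, I would close the argument by applying it to $r_1=r_c$, obtaining $C_n\le I_U$ for the given $r_0$.
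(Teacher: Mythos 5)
Your proposal is correct and follows essentially the same route as the paper's proof: it invokes Lemma~\ref{Upbound1}, then peels the causally-factored weight $r_1$ backward from $i=n$ to $i=1$ by alternately bounding each weighted sum over $x_i$ by $\max_{x_i}$ and pushing the free output sums through the remaining weights, before specializing to the capacity-achieving $r_c$. The dependence bookkeeping you flag as the delicate point is exactly the content of steps (b) and (c) in the paper's argument, so nothing is missing.
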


The next part of Theorem \ref{Thupbound} is to show that the bound is tight.
\begin{lemma}\label{Upbound2tight}.
The upper bound in Lemma \ref{Upbound2} is tight, and is obtained by $r(x^n||y^{n-d})$ that achieves the capacity.
\begin{proof}
In Lemma \ref{Upbound2}, we showed only half of the proof of the theorem, i.e., the existence of an upper bound. To prove this lemma, we need to show that this inequality is tight. For that purpose, we use the Lagrange multipliers method with the KKT conditions with respect to all $r(x_i|x^{i-1},y^{i-d})$s. We can use the KKT conditions since the directed information is a concave function in all $r(x_i|x^{i-1},y^{i-d})$s, as seen in Lemma \ref{whyalt}.

We define the Lagrangian as
\begin{align}
J=\sum_{x^n,y^n}r(x^n||y^{n-d})&\cdot p(y^n||x^n)\log{\frac{p(y^n||x^n)}{\sum_{{x'}^n}p(y^n||x'^n)\cdot r(x'^n||y^{n-d})}}\nonumber\\
&-\sum_{i=1}^n\sum_{x^{i-1},y^{i-d}}\nu_{i,(x^{i-1},y^{i-d})}(\sum_{x_i}r(x_i|x^{i-1},y^{i-d})-1)
+\sum_{i=1}^n\sum_{x^{i-1},y^{i-d}}h_{i,(x^{i-1},y^{i-d})}r(x_i|x^{i-1},y^{i-d}).\nonumber
\end{align}
Now, for every $r(x_i|x^{i-1},y^{i-d})$, we have
\begin{align}
\frac{\partial J}{\partial r(x_i|x^{i-1},y^{i-d})}&=\sum_{x_{i+1},y_{i-d+1}}r(x_{i+1}|x^{i},y^{i-d+1})\cdots\sum_{x_{n},y_{n-d}}r(x_n|x^{n-1},y^{n-d})\cdot\nonumber\\
&\ \ \ \ \ \ \sum_{y_{n-d+1}^n}p(y^n||x^n)\log\frac{p(y^n||x^n)}{\sum_{x'^n}p(y^n||x'^n)\cdot r(x'^n||y^{n-d})}-\nu_{i,(x^{i-1},y^{i-d})}+h_{i,(x^{i-1},y^{i-d})}.\nonumber
\end{align}
Setting $\frac{\partial J}{\partial r(x_i|x^{i-1},y^{i-d})}=0$ we are left with two cases. For $r(x_i|x^{i-1},y^{i-d})>0$ the KKT conditions requires us to set $h_i=0$ and we obtain
\begin{align}
\sum_{x_{i+1},y_{i-d+1}}r(x_{i+1}|x^{i},y^{i-d+1})\cdots\sum_{x_{n},y_{n-d}}r(x_n|x^{n-1},y^{n-d})
\sum_{y_{n-d+1}^n}p(y^n||x^n)\log\frac{p(y^n||x^n)}{\sum_{x'^n}p(y^n||x'^n)\cdot r(x'^n||y^{n-d})}=\nu_i,\nonumber
\end{align}
whereas for $r(x_i|x^{i-1},y^{i-d})=0$ we set $h_i>0$ and the equality becomes an inequality.

We now analyze our results for the case where $r(x_i|x^{i-1},y^{i-d})>0$. First, we note that for $i=n$ we have that
\begin{align}
\sum_{y_{n-d+1}^n}p(y^n||x^n)\log\frac{p(y^n||x^n)}{\sum_{x'^n}p(y^n||x'^n)\cdot r(x'^n||y^{n-d})}=\nu_{n,(x^{n-1},y^{n-d})},\nonumber
\end{align}
and thus constant for every $x_n$. As a result, for $i=n-1$ we have
\begin{align}
\sum_{x_n,y_{n-d}}r(x_n|x^{n-1},y^{n-d})&\sum_{y_{n-d+1}^n}p(y^n||x^n)\log\frac{p(y^n||x^n)}{\sum_{x'^n}p(y^n||x'^n)\cdot r(x'^n||y^{n-d})}\nonumber\\ &=\sum_{y_{n-d}}\max_{x_n}\sum_{y_{n-d+1}^n}p(y^n||x^n)\log\frac{p(y^n||x^n)}{\sum_{x'^n}p(y^n||x'^n)\cdot r(x'^n||y^{n-d})}\nonumber\\
&=\nu_{n-1,(x^{n-2},y^{n-d-1})}\nonumber
\end{align}
that again, is constant for every $x_{n-1}$. We can move backwards and obtain that for $i=1$,
\begin{align}
\sum_{x_2}r(x_2|x_1)&\cdots\sum_{x_d}r(x_d|x^{d-1})\sum_{x_{d+1},y_1}r(x_{d+1}|x^d,y_1)\cdots r(x_n|x^{n-1},y^{n-d})\cdot\nonumber\\
&\ \ \ \ \sum_{y_{n-d+1}^n}p(y^n||x^n)\log\frac{p(y^n||x^n)}{\sum_{x'^n}p(y^n||x'^n)\cdot r(x'^n||y^{n-d})}\nonumber\\
&=\sum_{x_2}r(x_2|x_1)\underbrace{\max_{x_3^d}\sum_{y_1}\max_{x_{d+1}}\cdots\max_{x_n}\sum_{y_{n-d+1}^n}p(y^n||x^n)\log\frac{p(y^n||x^n)}{\sum_{x'^n}p(y^n||x'^n)\cdot r(x'^n||y^{n-d})}}_\textrm{$\nu_{2,(x_1)}$}\nonumber\\
&=\underbrace{\max_{x_2^d}\sum_{y_1}\max_{x_{d+1}}\cdots\max_{x_n}\sum_{y_{n-d+1}^n}p(y^n||x^n)\log\frac{p(y^n||x^n)}{\sum_{x'^n}p(y^n||x'^n)\cdot r(x'^n||y^{n-d})}}_\textrm{$\nu_{1}$}\nonumber\\
&=\max_{x^d}\sum_{y_1}\max_{x_{d+1}}\cdots\max_{x_n}\sum_{y_{n-d+1}^n}p(y^n||x^n)\log\frac{p(y^n||x^n)}{\sum_{x'^n}p(y^n||x'^n)\cdot r(x'^n||y^{n-d})}.\nonumber
\end{align}
Using the analysis above, we find an expression for $C_n$ where $r(x^n||y^{n-d})$ achieves it. In the following equations we can assume that $r(x^n||y^{n-d})>0$, since otherwise, for the specific $x^n,y^n$, the expression for $C_n$ will contribute $0$ to the summation.
\begin{align}
C_n&=\frac{1}{n}\sum_{x^n,y^n}r(x^n||y^{n-d})\cdot p(y^n||x^n)\log\frac{p(y^n||x^n)}{\sum_{x'^n}p(y^n||x'^n)\cdot r(x'^n||y^{n-d})}\nonumber\\
&=\frac{1}{n}\sum_{x_1}r(x_1)\sum_{x_2}r(x_2|x_1)\cdots\sum_{x_d}r(x_d|x^{d-1})\sum_{x_{d+1},y_1}r(x_{d+1}|x^d,y_1)\nonumber\\
&\ \ \ \cdots r(x_n|x^{n-1},y^{n-d})\sum_{y_{n-d+1}^n}p(y^n||x^n)\log\frac{p(y^n||x^n)}{\sum_{x'^n}p(y^n||x'^n)\cdot r(x'^n||y^{n-d})}\nonumber\\
&=\frac{1}{n}\sum_{x_1}r(x_1)\max_{x^d}\sum_{y_1}\max_{x_{d+1}}\cdots\max_{x_n}\sum_{y_{n-d+1}^n}p(y^n||x^n)\log\frac{p(y^n||x^n)}{\sum_{x'^n}p(y^n||x'^n)\cdot r(x'^n||y^{n-d})}\nonumber\\
&\stackrel{(a)}{=}\frac{1}{n}\max_{x^d}\sum_{y_1}\max_{x_{d+1}}\cdots\max_{x_n}\sum_{y_{n-d+1}^n}p(y^n||x^n)\log\frac{p(y^n||x^n)}{\sum_{x'^n}p(y^n||x'^n)\cdot r(x'^n||y^{n-d})},\nonumber
\end{align}
where (a) is due to the analysis above for $i=1$. We showed that the upper bound is tight, and thus the lemma is proven.
\end{proof}
\end{lemma}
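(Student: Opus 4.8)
The plan is to exploit the one-sided bound already in hand. Lemma \ref{Upbound2} gives $C_n\le I_U$ for \emph{every} reference PMF $r_0$, so to establish tightness it suffices to exhibit a single $r_0$ for which the bound is met with equality, and the natural candidate is the capacity-achieving $r$ itself. Concretely, I would show that if $r(x^n||y^{n-d})$ attains $C_n$, then substituting $r_0=r$ into the expression inside $I_U$ reproduces exactly $\frac{1}{n}\sum_{x^n,y^n}r(x^n||y^{n-d})p(y^n||x^n)\log\frac{p(y^n||x^n)}{\sum_{x'^n}p(y^n||x'^n)r(x'^n||y^{n-d})}=C_n$. The engine for this is the Karush--Kuhn--Tucker characterization of the maximizer, which is legitimate because $I(X^n\to Y^n)$ is concave by Lemma \ref{isconcave} and the normalization constraints are affine.

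First I would form the Lagrangian, attaching multipliers $\nu_{i,(x^{i-1},y^{i-d})}$ to each normalization constraint $\sum_{x_i}r(x_i|x^{i-1},y^{i-d})=1$ and slack multipliers $h_{i,(x^{i-1},y^{i-d})}\ge0$ to the nonnegativity constraints, and then differentiate with respect to a single factor $r(x_i|x^{i-1},y^{i-d})$. The derivative of the objective splits into an explicit ``cost-to-go'' term, in which the remaining factors $r_{i+1},\dots,r_n$ average the log-ratio $\sum_{y_{n-d+1}^n}p(y^n||x^n)\log\frac{p(y^n||x^n)}{\sum_{x'^n}p(y^n||x'^n)r(x'^n||y^{n-d})}$, plus a contribution from differentiating the self-referential denominator $\sum_{x'^n}p\,r$ inside the logarithm. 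The latter contribution, summed against $r\,p$, collapses to a quantity independent of $x_i$ (the same constant that appears in Arimoto's original derivation and in Lemma \ref{qfix}), so it can be folded into $\nu_i$. Complementary slackness then yields the clean stationarity statement: on the support of $r_i$ the cost-to-go equals $\nu_{i,(x^{i-1},y^{i-d})}$, a constant in $x_i$, while off the support it is $\le\nu_i$.

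The heart of the argument is a backward induction on $i$ from $n$ down to $1$. At $i=n$ the stationarity condition says the innermost sum $\sum_{y_{n-d+1}^n}p\log\frac{p}{\sum p r}$ is constant over supported $x_n$; since it cannot exceed this constant off the support, averaging it against $r(x_n|x^{n-1},y^{n-d})$ equals replacing that average by $\max_{x_n}$. This upgrades an $r$-weighted sum into a maximization and identifies the result as $\nu_{n-1}$, itself constant in $x_{n-1}$, and the step repeats. Carrying it all the way to $i=1$ shows that the fully nested max--sum expression equals $\nu_1$. I would then evaluate $C_n$ directly: expanding $r(x^n||y^{n-d})=\prod_i r_i$ and peeling off the sums one index at a time, each $r_i$-average is promoted to $\max_{x_i}$ by the identity just established, so the chain telescopes into $\frac{1}{n}\max_{x^d}\sum_{y_1}\max_{x_{d+1}}\cdots\max_{x_n}\sum_{y_{n-d+1}^n}p\log\frac{p}{\sum p r}$, which is precisely the $I_U$ integrand at $r_0=r$. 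Combined with $C_n\le\min_{r_0}[\cdots]$ from Lemma \ref{Upbound2}, this forces equality and shows the minimizing $r_0$ is the capacity-achiever.

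The main obstacle I anticipate is the bookkeeping in the derivative step: one must verify that the terms coming from the denominator $\sum_{x'^n}p\,r$ really are constant in $x_i$ (using $\sum_{y_{n-d+1}^n}p(y^n||x^n)=p(y^{n-d}||x^{n-d})$ together with the causal-conditioning sum rules \eqref{casprop1}) so that absorbing them into $\nu_i$ is harmless. The second delicate point is the ``average equals max'' conversion driving the induction, which is exactly where complementary slackness enters: it hinges on the maximizing $x_i$ lying in the support of $r_i$, guaranteed by the KKT inequality off the support.
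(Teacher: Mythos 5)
Your proposal is correct and follows essentially the same route as the paper: a KKT/Lagrangian characterization of the capacity-achieving $r$, complementary slackness turning each $r_i$-weighted average of the cost-to-go into a $\max_{x_i}$ via backward induction from $i=n$ to $i=1$, and a direct peeling evaluation of $C_n$ that telescopes into the nested max--sum form of $I_U$. Your explicit handling of the derivative of the self-referential denominator $\sum_{x'^n}p\,r$ (absorbing it into $\nu_i$) is in fact slightly more careful than the paper, which passes over that term silently.
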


Now we combine both lemmas to conclude our main theorem.
\begin{proof}[Proof of Theorem \ref{Thupbound}]
As showed in Lemma \ref{Upbound2}, there exists an upper bound for $C_n$. Lemma \ref{Upbound2tight} showed that this upper bound is tight, when using the PMF $r(x^n||y^{n-d})$ that achieves $C_n$. Thus, the theorem is proven.
\end{proof}

\textit{Generalization of Theorem \ref{Thupbound}} We generalize Theorem \ref{Thupbound} to the case where the feedback is a delayed function of the output (as presented in Appendix \ref{appfd}). We recall, that the optimization problem for this model is
\begin{align}
\max_{r(x^n||z^{n-d})}I(X^n\rightarrow Y^n).\nonumber
\end{align}
While solving this optimization problem, we defined the following set: $A_{i,d,z}=\{y^{i-d}:z^{i-d}=f(y^{i-d})\}$; namely, all output sequences $y^{i-d}$ s.t. the function in the delay sends them to the same sequence $z^{i-d}$. We use this notation for the upper bound.
In that case, the upper bound is of the form
\begin{align}
I_U=\frac{1}{n}\max_{x^d}\sum_{z_1}\max_{x_{d+1}}\cdots\sum_{z_{n-d}}\max_{x_n}\sum_{A_{n,d,z}}\sum_{y_{n-d+1}^n}p(y^n||x^n)\log\frac{p(y^n||x^n)}{\sum_{x'^n}p(y^n||x'^n)\cdot r(x'^n||z^{n-d})}.\nonumber
\end{align}
The proof for this upper bound is omitted due to its similarity to the case where $z_i=y_i$ for
all $i$, i.e., Theorem \ref{Thupbound}. Moreover, one can see that this is a generalization,
since if indeed $z_i=y_i$, then $A_{n,d,z}$ has only one sequence, $y_{n-d}$, and the equation
for $I_U$ coincides with the one in Theorem \ref{Thupbound}.

\end{document}